\newcolumntype{L}[1]{>{\raggedright\let\newline\\\arraybackslash\hspace{0pt}}m{#1}}
\newcolumntype{C}[1]{>{\centering\let\newline\\\arraybackslash\hspace{0pt}}m{#1}}
\def\underaccent#1{\mathord{\vtop{\ialign{##\crcr
$\hfil\displaystyle{#1}\hfil$\crcr\noalign{\kern1.5pt\nointerlineskip}
$\hfil\tilde{}\hfil$\crcr\noalign{\kern1.5pt}}}}}
\newcommand{\indep}{\perp \!\!\! \perp}
\newcommand{\R}{\mathbb{R}}
\newcommand{\N}{\mathbb{N}}
\newcommand{\bh}{\mathbf{h}}
\newcommand{\bs}{\mathbf{s}}
\newcommand{\bN}{\mathbf{N}}
\newcommand{\lsi}{\lambda(\bs_i)}
\newcommand{\lsj}{\lambda(\bs_j)}
\newcommand{\E}{\mathds{E}}
\newtheorem{theo}{Theorem}
\newtheorem{cor}{Corollary}
\newtheorem*{proof}{Proof}
\DeclareMathOperator\Corr{Corr}
\DeclareMathOperator\Var{Var}
\newcommand{\blind}{1}
\begin{document}

\if1\blind
{
\title{
Modelling Point Referenced Spatial Count Data: A Poisson Process Approach
}

\author[1,2]{Diego Morales-Navarrete}
\affil[1]{Departamento de Estad\'istica, Pontificia Universidad Cat\'olica de Chile, Santiago, Chile}
\affil[2]{Millennium Nucleus Center for the Discovery of Structures in Complex Data, Chile, \texttt{dbmorales@mat.uc.cl}}

\author[3,4]{Moreno Bevilacqua}
\affil[3]{Facultad de Ingenier\'ia y Ciencias, Universidad Adolfo Ib\'a\~nez, Vi\~na del Mar, Chile}
\affil[4]{
Dipartimento di Scienze Ambientali, Informatica e Statistica, Ca’ Foscari University of Venice, Italy, \texttt{moreno.bevilacqua@uai.cl}}

\author[5]{Christian Caama\~no-Carrillo}
\affil[5]{Departamento de Estad\'istica, Universidad del B\'io-B\'io, Concepci\'on, Chile,  \texttt{chcaaman@ubiobio.cl}}

\author[1,2,6]{Luis M. Castro}
\affil[6]{Centro de Riesgos y Seguros UC, Pontificia Universidad Cat\'olica de Chile, Santiago, Chile, \texttt{mcastro@mat.uc.cl}}

  \maketitle
} \fi

\if0\blind
{
  \bigskip
  \bigskip
  \bigskip
  \begin{center}
    {\LARGE\bf Modelling Point Referenced Spatial Count Data: A Poisson Process Approach}
\end{center}
  \medskip
} \fi

\bigskip

\newpage
\begin{abstract}
	
{\small
Random fields are useful mathematical tools  for representing natural phenomena with complex dependence structures in space and/or time. In particular, the Gaussian random field is commonly used due to its attractive properties and mathematical tractability. However, this assumption seems to be restrictive when dealing with counting data. To deal with this situation, we propose a random field with a Poisson marginal distribution considering a sequence of independent copies of a random field with an exponential marginal distribution as ``inter-arrival times'' in the counting renewal processes framework. Our proposal can be viewed as a spatial generalization of the Poisson counting process.

Unlike the classical hierarchical  Poisson Log-Gaussian model, our proposal generates a (non)-stationary random field that is mean square continuous and with Poisson marginal distributions. For the proposed Poisson spatial random field, analytic expressions for the covariance function and the bivariate distribution are provided. In an extensive simulation study, we investigate the weighted pairwise likelihood as a method for estimating  the Poisson random field parameters. 

Finally, the effectiveness of our methodology is illustrated by an analysis of reindeer pellet-group survey data, where a zero-inflated version of the proposed model is compared with  zero-inflated Poisson Log-Gaussian and Poisson Gaussian copula models. 
Supplementary materials for this article, including technical proofs and \texttt{R} code for reproducing the
work, are available as an online supplement.
}

\end{abstract}

%
%

Keywords:  Gaussian random field; Gaussian copula; Pairwise likelihood function; Poisson distribution; Renewal process
\section{Introduction} 
\label{sec:introduction}

The faecal pellet count technique is one of the most popular tool for estimating an animal species' abundance. Specifically, this technique uses the number of observed droppings combined with their decay time and the target animal species' defecation rate. With these ingredients, it is possible to obtain an accurate density estimation of an animal population. This method was proposed by \cite{BennetEtAl1940} and has been improved by several authors \citep[see for example][among others]{VanEttenBennet1965,MayleEtAl1999,KrebsEtAl2001}.

The study motivating our research is a reindeer pellet-group survey conducted in the northern forest area of Sweden and previously analysed by \cite{Lee2016}. The objective of this survey was to assess the impact of newly established wind farms on reindeer habitat selection. This choice is crucial for the reindeer since it involves trade‐offs between fulfilling necessities for feeding, mating, parental care, and risk mitigation of predation \citep{SivertsenEtAl2016}.

Survey data was collected over the years 2009–2010 and presented a large number of zero counts. This situation is frequent when spatial species count data are collected since the survey is conducted using a point transect design \citep{BucklandEtAl2001}. This design considers a set of $K$ plots as systematically spaced plots along lines (transects) located throughout the survey region, where $K$ should be at least 20 for obtaining robust estimates of the abundance. The study area was 250 km$^2$, the distance between each transect was 300 m. On each transect, the distance between each plot was 100 m. The size of each plot was 15 m$^2$ with a radius of 2.18 m.  


From a modelling viewpoint, the analysis of the reindeer pellet-group data requires the development of statistical models for geo-referenced count data that take into account both spatial dependence and the excessive number of zeros. Random fields or stochastic processes are useful models when dealing with geo-referenced spatial or spatio-temporal data \citep{Stein:1999,Cressie:Wikle:2011,Banerjee-Carlin-Gelfand:2014}. In particular, the Gaussian random field is widely used due to  its attractive properties and mathematical tractability \citep{GELFAND201686}. Gaussianity is clearly a restrictive assumption when dealing with counting data. However, many models of current use for spatial count data employ Gaussian random fields as building blocks.

The first example is the hierarchical model approach proposed by \cite{Diggle:Tawn:Moyeed:1998}, which can be viewed as a generalized linear mixed model \citep{Diggle-Ribeiro:2007,Diggle-Giorgi:2019}. Under this framework, non-Gaussian models for spatial data can be specified using a link function and  a latent Gaussian random field through a conditionally independence assumption. In particular, the Poisson Log-Gaussian random field (Poisson LG hereafter) has been widely applied for modelling count spatial data \citep[see for instance][for interesting applications and in-depth study of its properties] {Christensen:Waagepetersen:2002,Guillot_et_all:2009,Oliveira:2013}. 
Similar models, that can be defined hierarchically in terms of the specification of the first two moments and a correlation function have been proposed in \cite{Monestiez_et_al2006} and \cite{Oliveira:2014}.
 
It is important to stress that the conditional independence assumption underlying these kind of models leads to (a) random fields with  marginal distributions that are not Poisson and (b) random fields with a ``forced'' nugget effects that implies no mean square continuity.

\begin{figure}[htb!]
\scalebox{0.7}{
\begin{tabular}{cc}
\includegraphics[width=7cm, height=5.5cm]{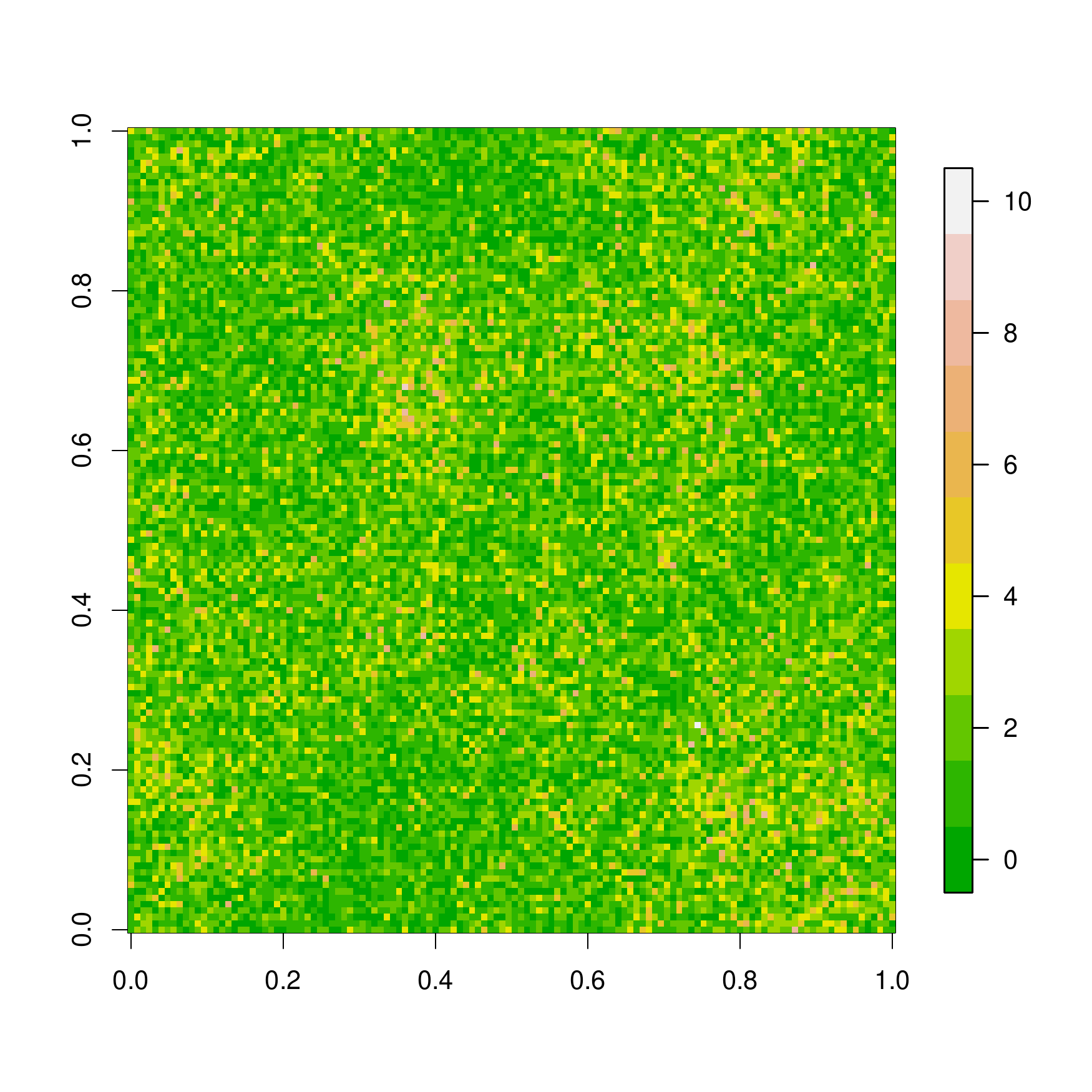} & \includegraphics[width=7cm, height=5.5cm]{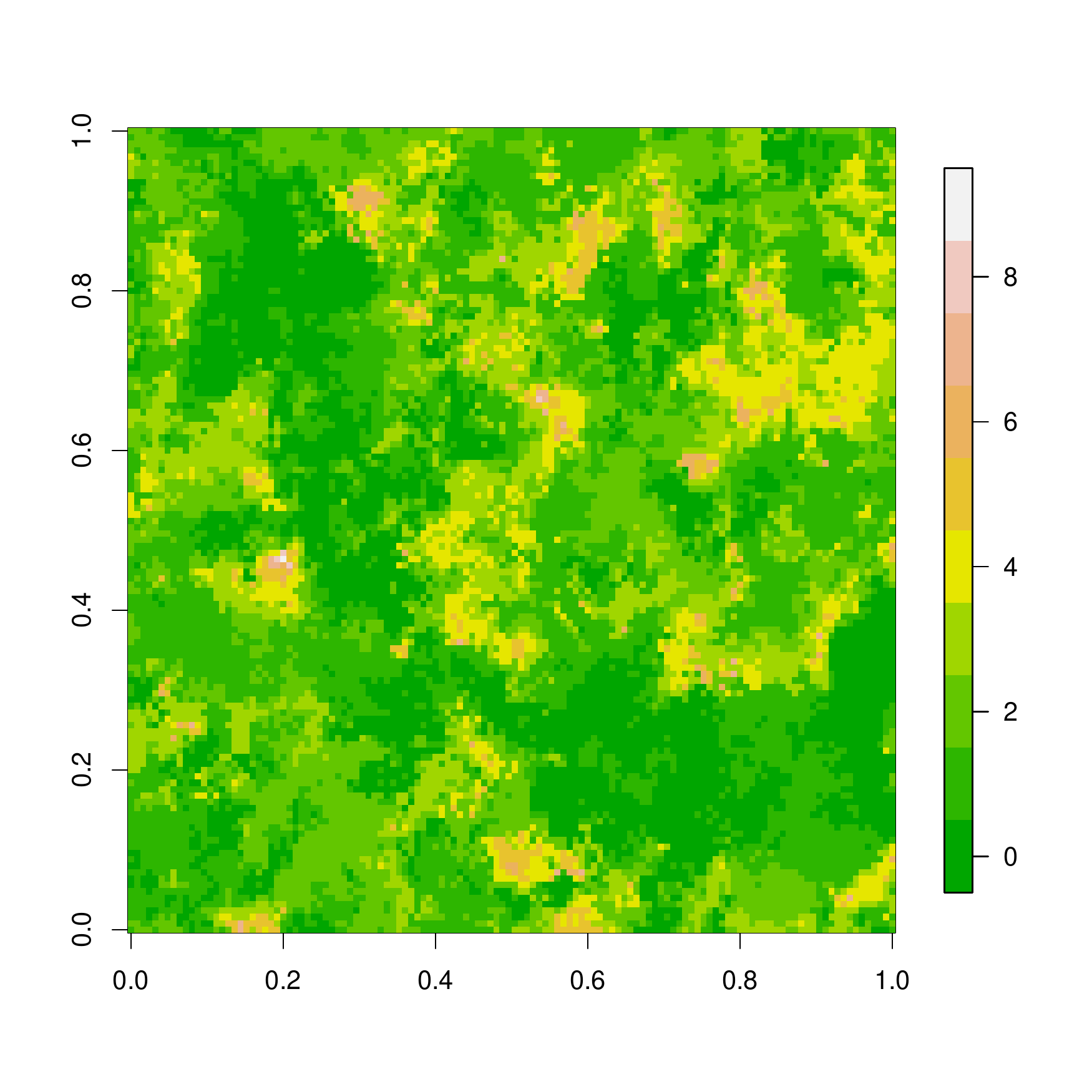}\\
(a)&(b)\\
\includegraphics[width=7cm, height=5.5cm]{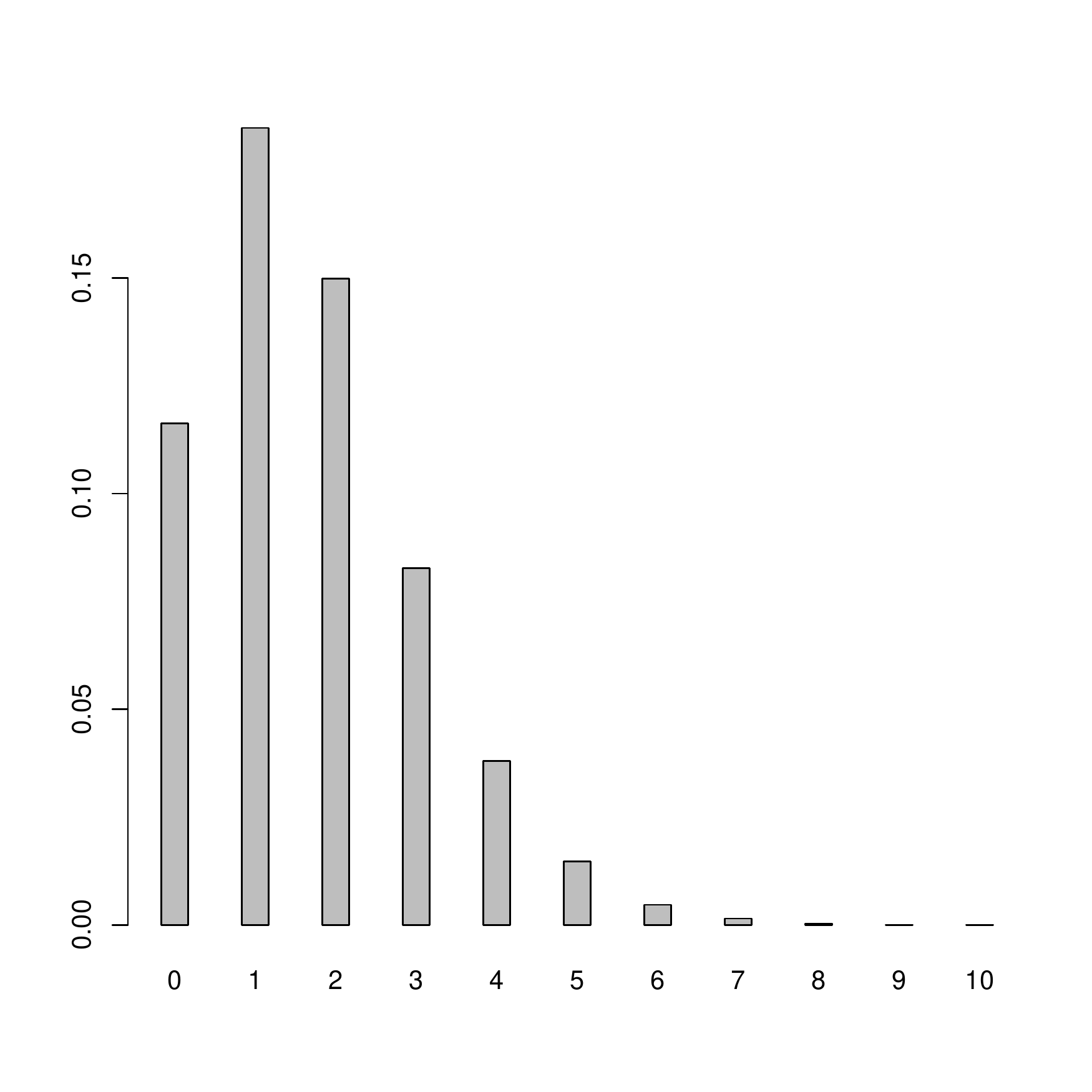} & \includegraphics[width=7cm, height=5.5cm]{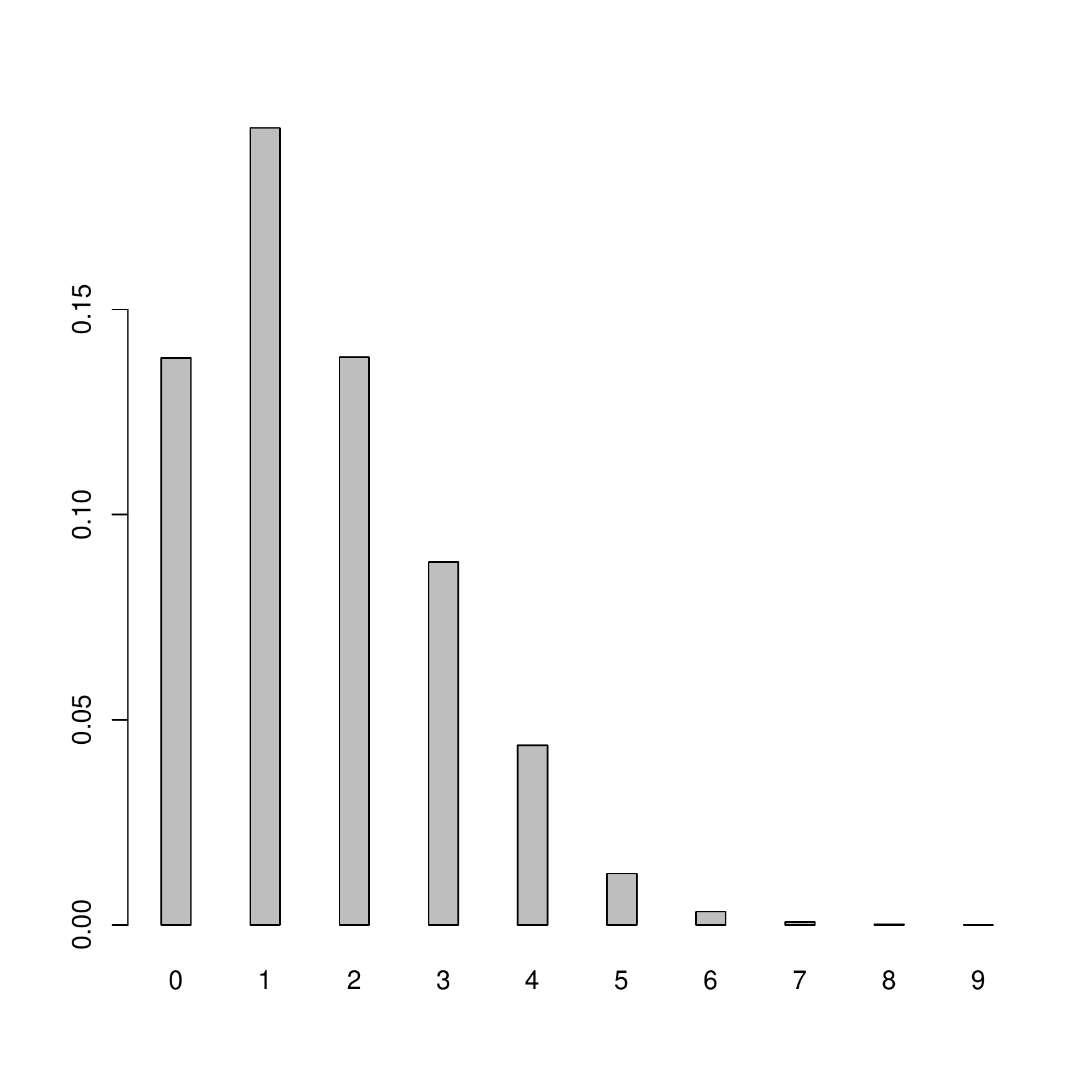} \\
(c)&(d)\\
\end{tabular}}
\caption{A realization of a Poisson LG random field,
 where the LG random field is given by $e^{\mu+\sqrt{\sigma^2} G(\bs)}$, where $G$ is a standard Gaussian random field with parameters $\mu=0.5$ and $\sigma^2=0.05$ (panel a) and its associated histogram (panel c).
A realization of our proposed Poisson random field with $\lambda=e^{0.5+0.05/2}$ (panel b) and its associated histogram (panel d). In both cases the underlying isotropic correlation is $\rho(r)=(1-r/0.5)^4_+$.
 } \label{fig:poi}
\end{figure}

To illustrate this situation, Figure \ref{fig:poi} (a) shows a realization on the unit square of a Poisson LG random field, $i.e.$  $e^{\mu+\sqrt{\sigma^2} G(\bs)}$, where $G$ is the standard Gaussian random field with isotropic correlation $\rho(r)=(1-r/0.5)^4_+$ belonging to the Generalized Wendland family \citep{Bevilacqua:Faouzi_et_all:2019}, $\mu=0.5$, $\sigma^2=0.05$, $r$ is the spatial distance and $(\cdot)_+$ denotes the positive part. In this case, the mean of the Poisson LG field is given by $\lambda=e^{0.5+0.05/2}$. The associated histogram is shown in panel (d).

Additionally, Figure \ref{fig:poi} (b) shows a realization and the associated histogram of our proposed random field (see Equation \eqref{qqq}), with the same mean and underlying correlation function of the Poisson LG model. A quick analysis of both figures reveals a ``whitening'' effect on the Poisson LG random field's paths because of the ``forced'' discontinuity at the origin of the correlation function for the Poisson LG (see Section \ref{sec:3.1}). This potential problem, which has also been highlighted by \cite{Oliveira:2013}, indicates that the Poisson LG random field may impose severe restrictions on the correlation structure and may be inadequate to model spatial count data specially consisting of small counts.

The second example is the Poisson spatial model obtained using Gaussian copula \citep{ Kazianka:Pilz:2010,Masarotto:Varin:2012,Joe:2014},
which is referred to as the  Poisson GC random field hereafter. This approach has some potential benefits with respect to the hierarchical models \citep[see][for a comparison between these two approaches]{Han:Oliveira:2016}.
For example, the resulting random field has Poisson marginals and can or cannot be mean square continuous depending on whether the latent Gaussian random field is mean square continuous or not. In addition to some some criticisms concerning the lack of uniqueness of the copula when applied to discrete data \citep{Genest:Neslehova:2007,Trivedi:Zimmer:2017}, this approach has no clarity regarding what underlying physical mechanism is generating the data, making it less interesting  from an interpretability perspective.

Our proposal tries to solve the drawbacks of the Poisson LG and of the Poisson GC approaches by specifying a new class of spatial counting random fields based on the Poisson counting process \citep{Cox:1970,MAINARDI2007725,ross2008stochastic} applied to the spatial setting. Specifically, we first consider a random field with exponential marginal distributions obtained as a rescaled sum of two independent copies of an underlying  standard Gaussian random field. Then, by considering independent copies of the exponential random field as inter-arrival times in the counting renewal processes framework, we obtain a (non-)stationary random field with Poisson marginal distributions. By construction, for each spatial location, the proposed model is a Poisson counting process \textit{i.e.} it represents the random number of events  occurring in an arbitrary interval of time when the time between the occurrence of two events is exponentially distributed. 
More importantly, given two location sites, the associated Poisson counting processes are spatially correlated. For this reason, the proposed model can be viewed as a spatial generalization of the Poisson process.

For the novel Poisson random field, we provide the covariance function and analytic expressions for the bivariate distribution in terms of the regularized incomplete Gamma and confluent hypergeometric functions \citep{Gradshteyn:Ryzhik:2007}. 
It follows that the dependence of the proposed Poisson random field is indexed by the correlation function of the underlying Gaussian random field and by the mean parameter. It is important to stress that our theoretical results are inspired by the two-dimensional renewal theory described in \cite{Hunter:1974}.

The Poisson random field estimation is performed with the weighted pairwise likelihood ({\it wpl}) method \citep{Lindsay:1988,Varin:Reid:Firth:2011,Bevilacqua:Gaetan:2015} exploiting the results obtained from the bivariate distribution. In particular, in an extensive simulation study, we explore the efficiency of the {\it wpl} method when estimating the parameters of the proposed Poisson random field. We also explore the statistical efficiency of a Gaussian misspecified version of the {\it wpl} method \citep{cppp,Bev:2020}, which is also called Gaussian quasi-likelihood in some literature \citep{AOS1121}. The
findings show that the misspecified {\it wpl} leads to a less efficient  estimator, in particular for low counts. However, the method has some computational benefits. In addition, we compare the performance of the optimal linear predictor under the proposed model with the optimal predictors obtained using the Poisson GC and Poisson LG models.

Finally, in the real data application, 
we consider a zero-inflated version of the proposed Poisson random field 
to deal with excess zeros in the reindeer pellet-group counts data, using the zero-inflated Poisson LG and Poisson GC models as benchmarks. The methods proposed in this paper are  implemented in the \texttt{R} \citep{R2020} package \texttt{GeoModels} \citep{Bevilacqua:2018aa} and \texttt{R} code  for reproducing the work is available as an online supplement.

The remainder of the paper is organized  as follows. In Section \ref{sec:2}, we provide some basic notation and describe the exponential random field. Section \ref{sec:3} introduces our proposal by presenting a new class of counting random fields under the general renewal counting framework, focusing on the Poisson random field. A study of the associated correlation function and bivariate distribution is presented and, in addition, a zero-inflated extension of the proposed model is introduced. In Section \ref{sec:4}, the {\it wpl} method for obtaining the {\it wpl} estimates and the optimal linear prediction is discussed. Section \ref{sec:5} provides an in-depth simulation study to investigate the performance of the Poisson random field in spatial and spatio-temporal settings. In Section \ref{sec:6}, the  faecal pellet-group counts dataset previously described is re-analysed. Section \ref{sec:7} closes the paper with a discussion of our main findings and future research directions.
 
\section{A random field with  exponential marginal distributions}\label{sec:2}

To make the paper self-contained, we start by  introducing some notation in this Section. For the rest of the paper, given a second order real-valued random field $Q=\{Q(\bs), \bs \in A\subset \R^d \}$, we denote by $f_{Q(\bs)}$ and  $F_{Q(\bs)}$ the marginal probability density function ({\it pdf}) and cumulative distribution function ({\it cdf}) of $Q(\bs)$, respectively. Moreover, for any set of distinct points $(\bs_1,\ldots,\bs_l)^\top$, $l\in \N$ and $\bs_i \in A$, we denote the correlation function by $\rho_Q(\bs_i,\bs_j)=\Corr(Q(\bs_i),Q(\bs_j))$. In the stationary case, the adopted notation is $\rho_Q(\bh)=\Corr(Q(\bs_i),Q(\bs_j))$, where $\bh=\bs_i-\bs_j $ is the lag separation vector. Finally, $f_{\bm{Q}_{ij}}$ denotes the {\it pdf} of the bivariate random vector $\bm{Q}_{ij}=(Q(\bs_i),Q(\bs_j))^\top$, $i \neq j$. If the random field $Q$ is a discrete-valued random field, then $\Pr(Q(\bs)=q)$ and $\Pr(Q(\bs_i)=n,Q(\bs_j)=m)$, $q,m,n \in \N$ will denote the marginal and bivariate discrete probability functions, respectively.

Let $G=\{G(\bs), \bs \in A \}$ 
be a zero mean and unit variance weakly stationary Gaussian random field with correlation function $\rho_G(\bh)$. Henceforth, we call $G$ the Gaussian underlying random field, and with some abuse of notation, we set $\rho(\bh):=\rho_{G}(\bh)$, denoting this as the underlying correlation function. Let $G_1,G_2$ be two independent copies of $G$ and let us define the random field  $W=\{W(\bs), \bs\in A\}$ as follows:
\begin{equation}\label{gamma}
   W(\bs) :=  \frac{1}{2\lambda(\bs)}\sum_{k=1}^{2} G^2_k({\bs}),
\end{equation}
where $\lambda(\bs)>0$ is a non-random function. $W$ is a stationary random field with a marginal exponential distribution, with parameter $\lambda(\bs)$ denoted by $W(\bs) \sim  \mbox{Exp}(\lambda(\bs))$ with $\E(W(\bs))=1/\lambda(\bs)$, $\mbox{Var}(W(\bs))=1/\lambda^2(\bs)$, and it can  be easily observed  that $\rho_{W}(\bh) =\rho^2(\bh)$.

The associated multivariate exponential density was discussed earlier by \cite{Krishnamoorthy:Parthasarathy:1951}, and since then, its properties have been studied by several authors \citep{Krishnaiah:Rao:1961,Royen:2004}. However, likelihood-based methods for exponential random fields can be troublesome since the analytical expressions of the multivariate density can be derived only in some specific cases. For example, when $d=1$ and the underlying correlation function is exponential, the multivariate {\it pdf} is given by \citep{Bevilacqua:2018ab}:
\vspace{-0.5cm}
\begin{eqnarray*}\label{gammafd1}
f_{W}(w_1,\ldots,w_n)&=&\exp{\left[-\frac{w_1\lambda_1}{(1-\rho^2_{1,2})}
-\frac{w_n \lambda_n}{(1-\rho^2_{n-1,n})}-\sum\limits_{i=2}^{n-1}\frac{(1-\rho^2_{i-1,i}\rho^2_{i,i+1})\lambda_i w_i}{(1-\rho^2_{i-1,i})(1-\rho^2_{i,i+1})}\right]}\nonumber\\
&&\times\prod\limits^{n-1}_{i=1}I_{0}\left(\frac{2\rho_{i,i+1}\sqrt{w_i \lambda_i w_{i+1} \lambda_{i+1}}}{(1-\rho^2_{i,i+1})}\right)   \times    \left(    \prod\limits^{n-1}_{i=1}(1-\rho^2_{i,i+1})    \right)^{-1},
\end{eqnarray*}
with
$\rho_{ij}:=\exp\{-|s_i-s_j|/\phi\}$, $\lambda_i=\lambda(s_i)$, $\phi>0$ and $I_{a}(x)$ being the modified Bessel function of the
first kind of order $a$. Regardless of the dimension of the space $A$ and the type of correlation function, 
the bivariate exponential {\it pdf} is given by \citep{Kibble:1941,VereJones:1997}:
\begin{equation*}\label{pairchi2}
f_{W_{ij}}(w_{i},w_j)=\frac{e^{-\frac{(\lambda(\bs_i)w_i+\lambda(\bs_j)w_j)}{(1-\rho^2(\bh))}}}{(1-\rho^2(\bh))}
 I_{0}
\left( \frac{ 2\sqrt{\rho^2(\bh)\lambda(\bs_i)\lambda(\bs_j)w_iw_j} } {(1-\rho^2(\bh))}\right).
\end{equation*}
The exponential random field $W$ will be further used to define a new random field with Poisson marginal distributions.

\section{Spatial Poisson random fields}\label{sec:3} 

Our proposal relies on considering an infinite sequence of independent copies $Y_1,Y_2 \ldots$, of $Y= \{Y(\bs), \bs\in A\}$, a positive continuous random field. We define a new class of counting random fields, $N_{t(\bs)}:= \{N_{t(\bs)}(\bs), \bs\in A\}$, $t(\bs)\geq 0$, as follows:
\begin{equation}\label{qqq}
N_{t(\bs)}(\bs):=
	\begin{cases}0
	&\mbox{if} \quad 0\leq t(\bs)<S_1(\bs) \\
	\max\limits_{n\geq 1}\{  S_n(\bs)\leq  t(\bs)\} & \mbox{if} \quad    S_1(\bs) \leq t(\bs) \end{cases},
\end{equation}	
where $S_n(\bs)=\sum_{i=1}^{n}Y_i(\bs)$ is the $n$-fold convolution of $Y$ and $N_{t(\bs)}(\bs)$ represents the random total number of events that have occurred up to time $t(\bs)$ at location site $\bs \in A$. In our approach, we are assuming 
that the location sites share a common time $i.e.$
$t=t(\bs)$. This assumption is justified since the observation window is fixed for each location site in the pellet group count application.

The proposed  model  can be viewed as a spatial generalization of the renewal counting processes \citep{Cox:1970,MAINARDI2007725}, where we consider independent copies of a positive random field as inter-arrival times or waiting times, instead of an independent and identically distributed sequence of positive random variables.

For each $\bs \in A$, and using the classical results from the renewal counting processes theory, the marginal discrete probability function of $N$ is given by:
\begin{equation}\label{www}
\Pr(N_t(\bs)=n)=F_{S_n(\bs)}(t) - F_{S_{n+1}(\bs)}(t).
\end{equation}
In addition, the marginal mean (the so-called renewal function) and the variance of $N_t$ are given, respectively, by:
\begin{equation*}
 \E(N_t(\bs))=\sum_{i=1}^{\infty}  F_{S_i(\bs)}(t) , \quad \mbox{Var}(N_t(\bs))=\left(2\sum_{i=1}^{\infty} i F_{S_i(\bs)}(t) - \E(N_t(\bs))\right)   -( \E(N_t(\bs)))^2.
\end{equation*}
Different elections of the positive random field $Y$ lead to counting random fields with specific marginal distributions. 

In this paper, we assume that the ``spatial inter-arrival times” are exponentially distributed \textit{i.e.} we assume $Y\equiv W$, where $W$ is the positive random field defined in \eqref{gamma},
with $\mbox{Exp}(\lambda(\bs))$ marginal distribution and {\it cdf} given by $F_{Y(\bs)}(x)=1-e^{-\lambda(\bs)x}$, $x>0$. In this case, $S_n(\bs)\sim \mbox{Gamma}(n,\lambda(\bs))$ with $n\in \N$ is an Erlang distribution, with {\it cdf} given by: 
\begin{equation*}
F_{S_n(\bs)}(x)=1-\sum_{k=0}^{n-1}\frac{e^{-\lambda(\bs)x}(\lambda(\bs)x)^{k}}{k!},\quad x>0
\end{equation*}
and from \eqref{www}, we can obtain the marginal distribution of $N$ as:
\vspace{-0.5cm}
\begin{equation}\label{kjk}
\Pr(N_t(\bs)=n)=e^{-t\lambda(\bs)} [t\lambda(\bs)]^n/n!, \quad n=0,1,2,\ldots. 
\end{equation}
with $\E(N_t(\bs))=\mbox{Var}(N_t(\bs))=t\lambda(\bs)$. 

By construction, for each $\bs$, the proposed model is a Poisson counting process that is $N_t(\bs)\sim \mbox{Poisson}(t\lambda(\bs))$ represents 
the random number of events that have occurred up to time $t$, when the inter-arrival times are exponentially distributed.
In addition, given two arbitrary location sites $\bs_1$,$\bs_2$, the associated Poisson counting processes $N_t(\bs_1)$ and $N_t(\bs_2)$ are spatially correlated (see Section \ref{sec:3.1}).
Therefore, the proposed model can be viewed as a spatial generalization of the Poisson counting process. Hereafter and without loss of generality, $t$ is set to one, and $N_t$ is denoted as $N$.

We will call $N$ a Poisson random field with underlying correlation $\rho(\bh)$
because $N$ is marginally Poisson distributed and the dependence is indexed by a correlation function.

Note that, 
when the spatially varying mean (and variance) $\lambda(\bs)$ is not constant then $N$ is not stationary.
A typical parametric specification for the mean is given by $\lambda(\bs)=e^{X(\bs)^\top\bm{\beta}}$, where $X(\bs) \in  \R^k$ is a vector of covariates and $\bm{\beta}  \in  \R^k$ even though other types of parametric and non-parametric specifications can be used.

It is important to note that although the proposed Poisson random field is defined on the $d$-dimensional Euclidean space $A$, the proposed method can be easily adapted to other spaces, such as the continuous space-time space, the spherical spaces or the discrete space. The key for this extension is the specification of a suitable underlying correlation function $\rho(\bh)$. For example, a correlation function defined on the space-time setting, {\it i.e.}, $A\subset \R^d\times \R$ \citep{Gneiting:2002} or on the sphere of arbitrary radius {\it i.e.}, $A\subseteq  \mathbb{S}^2=\{ \bs \in \R^3, ||\bs|| = M\}$, $M>0$ \citep{gneiting2013,porcubev}. In the case of lattice or areal data, a suitable precision matrix with an appropriate neighborhood structure should be specified for the underlying Gaussian Markov random field \citep{Rue:Held:2005}.
 
\subsection{Correlation function}\label{sec:3.1}

The following result, which can be found in  the pioneering work of \cite{Hunter:1974}, provides the correlation function $\rho_N(\bs_i,\bs_j)$ of the non-stationary Poisson random field with underlying correlation $\rho(\bh)$ depending on the regularized lower incomplete Gamma function:
\vspace{-0.5cm}
\begin{equation}\label{incgamma}
\gamma^{*}(a,x)=\frac{\gamma(a,x)}{\Gamma(a)}=\frac{1}{\Gamma(a)}\int\limits^{x}_{0}t^{a-1}e^{-t}dt,
\end{equation}
where $\gamma(\cdot,\cdot)$ is the lower incomplete Gamma function and $\Gamma(\cdot)$ is the Gamma function. Additionally, we define the function $\gamma^{\star}\left(a,x,x'\right)=\gamma^{\ast}\left(a,x\right)\gamma^{\ast}\left(a,x' \right)$, which considers the product of two regularized lower incomplete Gamma function sharing a common parameter.

\begin{theo}\label{Theo1}
Let $N$ be a non-stationary Poisson random field with underlying correlation $\rho(\bh)$. Then,
\begin{equation*}\label{ccp}
\rho_N(\bs_i,\bs_j)=\dfrac{\rho^2(\bh)(1-\rho^2(\bh))}{\sqrt{\lambda(\bs_i)\lambda(\bs_j)}}\sum\limits_{r=0}^{\infty}\gamma^*\left(r+1,\dfrac{\lambda(\bs_i)}{1-\rho^2(\bh)},  \dfrac{\lambda(\bs_j)}{1-\rho^2(\bh)}\ \right),
\end{equation*}
with $\bh=\bs_i-\bs_j$.
\end{theo}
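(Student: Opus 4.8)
The plan is to evaluate $\rho_N(\bs_i,\bs_j)=\Cov(N(\bs_i),N(\bs_j))/\sqrt{\Var(N(\bs_i))\Var(N(\bs_j))}$ directly. Because each marginal is $\mathrm{Poisson}(\lambda(\bs))$ with $t=1$, we have $\E(N(\bs))=\Var(N(\bs))=\lambda(\bs)$, so the denominator is just $\sqrt{\lambda(\bs_i)\lambda(\bs_j)}$ and the whole problem reduces to the covariance. Throughout I write $\lambda_i=\lambda(\bs_i)$, $\lambda_j=\lambda(\bs_j)$ and $D=1-\rho^2(\bh)$.

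First I would represent the counts as $N(\bs)=\sum_{n\ge1}\I\{S_n(\bs)\le1\}$, which is legitimate since $\{S_n(\bs)\le1\}=\{N(\bs)\ge n\}$. This gives
\[
\Cov(N(\bs_i),N(\bs_j))=\sum_{m\ge1}\sum_{n\ge1}\Big(\Pr(S_m(\bs_i)\le1,\,S_n(\bs_j)\le1)-\Pr(S_m(\bs_i)\le1)\Pr(S_n(\bs_j)\le1)\Big),
\]
a bivariate renewal (covariance) function evaluated at $(1,1)$. To exploit the i.i.d.\ increment structure I would pass to its two-dimensional Laplace transform, in the spirit of the bivariate renewal calculus of \cite{Hunter:1974}. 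The single increment $(W(\bs_i),W(\bs_j))$ has joint Laplace transform $\psi(s,w)=\E(e^{-sW(\bs_i)-wW(\bs_j)})$, which I would evaluate from the construction \eqref{gamma}: since $W(\bs)=\frac{1}{2\lambda(\bs)}\sum_{k=1}^2 G_k^2(\bs)$, $\psi$ factors over the two independent copies $G_1,G_2$, each factor being the moment generating function of a bivariate-Gaussian quadratic form (a reciprocal square root of a $2\times2$ determinant). This yields $\psi(s,w)=\lambda_i\lambda_j/(\lambda_i\lambda_j+s\lambda_j+w\lambda_i+swD)$.

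The key step is that, after summing the geometric-type double series in the transform domain and subtracting the transform of the product $\E(N_x(\bs_i))\E(N_y(\bs_j))=\lambda_i\lambda_j\,xy$, the Laplace transform of the covariance function collapses to $(sw)^{-1}\rho^2(\bh)\sum_{r\ge1}\psi(s,w)^r$; all the $m\neq n$ contributions cancel, which is exactly why the final answer is a single sum. Since $\psi^r$ is the joint Laplace transform of the $r$-fold convolution $(A_r,B_r):=(S_r(\bs_i),S_r(\bs_j))$ formed from the same $r$ increments — a shape-$r$ Kibble bivariate gamma — term-by-term inversion gives
\[
\Cov(N(\bs_i),N(\bs_j))=\rho^2(\bh)\sum_{r\ge1}\Pr(A_r\le1,\,B_r\le1).
\]

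To finish, I would evaluate $\Pr(A_r\le1,B_r\le1)$ through the negative-binomial/Poisson mixture representation of the Kibble law: expanding the Bessel function in the shape-$r$ bivariate gamma density shows that, conditionally on an index $K=k$ carrying weight $\binom{r+k-1}{k}D^{r}\{\rho^2(\bh)\}^{k}$, $A_r$ and $B_r$ are independent $\mathrm{Gamma}(r+k,\lambda_i/D)$ and $\mathrm{Gamma}(r+k,\lambda_j/D)$ variables, so that $\Pr(A_r\le1,B_r\le1)=\sum_{k\ge0}\binom{r+k-1}{k}D^{r}\{\rho^2(\bh)\}^{k}\gamma^{*}(r+k,\lambda_i/D)\gamma^{*}(r+k,\lambda_j/D)$. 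Substituting, reindexing by $\ell=r+k$, and summing over $r$ at fixed $\ell$ leaves $\sum_{r=1}^{\ell}\binom{\ell-1}{r-1}D^{r}\{\rho^2(\bh)\}^{\ell-r}=D(D+\rho^2(\bh))^{\ell-1}=D$ by the binomial theorem. This collapses the double series to $\Cov(N(\bs_i),N(\bs_j))=\rho^2(\bh)\,D\sum_{\ell\ge1}\gamma^{*}(\ell,\lambda_i/D)\gamma^{*}(\ell,\lambda_j/D)$, which is the claimed identity after dividing by $\sqrt{\lambda_i\lambda_j}$ and reindexing $\ell=r+1$. I expect the main obstacle to be the transform-domain collapse together with the justification of the term-by-term inversion (equivalently, the interchange of the two infinite sums) and the verification of the Kibble mixture representation for general shape $r$; once these structural facts are secured, the remaining work — the quadratic-form determinant and the binomial identity — is routine.
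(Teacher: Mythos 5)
Your proof is correct and follows essentially the same route as the paper's own proof, which consists solely of a pointer to Hunter (1974, Section 5.2) --- precisely the two-dimensional Laplace-transform renewal calculus you reconstruct (the collapse of the covariance transform to $(sw)^{-1}\rho^2(\bh)\sum_{r\ge 1}\psi(s,w)^r$, inversion via the shape-$r$ Kibble bivariate gamma, and its negative-binomial mixture representation followed by the binomial collapse). The only difference is that your argument is self-contained where the paper defers entirely to the reference, and your intermediate formulas (the increment transform $\psi$, the mixture weights, and the reindexing $\ell=r+k$) all check out against the stated expression for $\rho_N(\bs_i,\bs_j)$.
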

\begin{proof}
For details, refer to \cite{Hunter:1974}, Section 5.2 (pages 38-39).
\end{proof}

The following result  provides a  closed form expression of the correlation function in terms of modified Bessel function  for the stationary case.

\begin{cor}\label{statcoor}
In Theorem \ref{Theo1}, when $\lambda(\bs)=\lambda$, the Poisson random field  is weakly stationary with the correlation function given by:
\begin{equation}\label{cpoi}
\rho_N(\bh,\lambda)=\rho^2(\bh)\left[1- \exp\left(-z(\bh,\lambda)\right)\left(I_0\left(z(\bh,\lambda)\right)+I_1\left(z(\bh,\lambda)\right)\right)\right],
\end{equation}
where $z(\bh,\lambda)=2\lambda (1-\rho^2(\bh))^{-1}$.
\end{cor}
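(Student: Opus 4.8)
The plan is to substitute $\lambda(\bs_i)=\lambda(\bs_j)=\lambda$ into the expression of Theorem \ref{Theo1} and to evaluate the resulting infinite series in closed form. Writing $a:=\lambda/(1-\rho^2(\bh))$, the common–parameter product $\gamma^{\star}(r+1,a,a)$ collapses to $\left[\gamma^{*}(r+1,a)\right]^2$ and the prefactor becomes $\rho^2(\bh)(1-\rho^2(\bh))/\lambda$, so that $\rho_N(\bh,\lambda)=\frac{\rho^2(\bh)(1-\rho^2(\bh))}{\lambda}\sum_{r=0}^{\infty}\left[\gamma^{*}(r+1,a)\right]^2$. Everything then reduces to computing $\Sigma:=\sum_{r=0}^{\infty}\left[\gamma^{*}(r+1,a)\right]^2$ in terms of Bessel functions.

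First I would recall that for integer first argument the regularized lower incomplete Gamma function in \eqref{incgamma} is a Poisson tail: if $X\sim\mbox{Poisson}(a)$, then $\gamma^{*}(r+1,a)=1-e^{-a}\sum_{k=0}^{r}a^k/k!=\Pr(X\geq r+1)=\Pr(X>r)$. This rewrites $\Sigma$ as a sum of squared survival probabilities. The key step is then a probabilistic reinterpretation: introducing an independent copy $X'$ of $X$, I would write $\left[\Pr(X>r)\right]^2=\Pr(X>r)\Pr(X'>r)=\Pr(\min(X,X')>r)$, and use the tail–sum identity $\E[M]=\sum_{r\geq 0}\Pr(M>r)$, valid for any $\N$-valued $M$, to obtain $\Sigma=\E[\min(X,X')]$. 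Using $\min(X,X')=\tfrac12\left(X+X'-|X-X'|\right)$ together with $\E[X]=\E[X']=a$ yields $\Sigma=a-\tfrac12\,\E|X-X'|$.

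It then remains to evaluate $\E|X-X'|$, the mean absolute value of the symmetric Skellam variable $Z=X-X'$ with probability mass function $\Pr(Z=k)=e^{-2a}I_{|k|}(2a)$. Hence $\E|Z|=2e^{-2a}\sum_{k\geq 1}k\,I_k(2a)$. The decisive computation is to sum this series: applying the recurrence $k\,I_k(z)=\tfrac{z}{2}\left[I_{k-1}(z)-I_{k+1}(z)\right]$ with $z=2a$ makes the sum telescope, leaving $\sum_{k\geq 1}k\,I_k(2a)=a\left[I_0(2a)+I_1(2a)\right]$, so that $\E|Z|=2a\,e^{-2a}\left[I_0(2a)+I_1(2a)\right]$. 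Substituting back gives $\Sigma=a\left\{1-e^{-2a}\left[I_0(2a)+I_1(2a)\right]\right\}$.

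Finally I would combine the pieces: since the prefactor satisfies $\frac{(1-\rho^2(\bh))}{\lambda}\,a=1$, and since $z(\bh,\lambda)=2a=2\lambda(1-\rho^2(\bh))^{-1}$ by definition, the expression collapses exactly to \eqref{cpoi}. Weak stationarity is immediate because, with $\lambda(\bs)\equiv\lambda$ constant, the marginals in \eqref{kjk} no longer depend on $\bs$ and $\rho_N$ depends on $\bs_i,\bs_j$ only through $\bh$. I expect the main obstacle to be the closed–form evaluation of $\Sigma$; the probabilistic rewriting of the squared incomplete–Gamma series as $\E[\min(X,X')]$ is what makes it tractable, and the telescoping of the Bessel recurrence is the computational crux.
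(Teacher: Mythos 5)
Your proof is correct, and every step checks out: the reduction to $\Sigma=\sum_{r\geq 0}\left[\gamma^{*}(r+1,a)\right]^2$ with $a=\lambda/(1-\rho^2(\bh))$, the Gamma--Poisson duality $\gamma^{*}(r+1,a)=\Pr(X>r)$ for $X\sim\mbox{Poisson}(a)$, the tail-sum identification $\Sigma=\E[\min(X,X')]=a-\tfrac{1}{2}\E|X-X'|$, the symmetric Skellam mass function $\Pr(X-X'=k)=e^{-2a}I_{|k|}(2a)$, and the telescoping evaluation $\sum_{k\geq 1}k\,I_k(2a)=a\left[I_0(2a)+I_1(2a)\right]$ from the recurrence $I_{k-1}(z)-I_{k+1}(z)=(2k/z)I_k(z)$ (legitimate since $I_N(2a)\to 0$ as $N\to\infty$ and all terms are positive); the final cancellation $(1-\rho^2(\bh))\,a/\lambda=1$ then gives exactly \eqref{cpoi}, and weak stationarity is immediate as you say. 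This is, however, a genuinely different route from the paper's. The paper's proof, relegated to the online supplement and rooted in Hunter's renewal-theoretic framework, works analytically with the special functions themselves: it handles the product of regularized incomplete Gamma functions through series or integral representations (the natural move being to resum $\sum_r (st)^r/(r!)^2 = I_0(2\sqrt{st})$ inside a double integral over $[0,a]^2$) and then extracts the closed form via Bessel-function identities. Your probabilistic argument buys interpretability and self-containment: it exposes $\Sigma$ as the expected minimum of two i.i.d.\ Poisson variables and $\E|X-X'|$ as the classical Gini mean difference of the Poisson distribution, so the combination $I_0+I_1$ appears for a structural reason rather than from special-function manipulation. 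What the analytic route buys in exchange is uniformity with the general non-stationary statement of Theorem \ref{Theo1}: when $\lambda(\bs_i)\neq\lambda(\bs_j)$ the product $\gamma^{\ast}(r+1,a_i)\,\gamma^{\ast}(r+1,a_j)$ is still $\Pr(\min(X_i,X_j)>r)$ for independent Poissons with different means, but the expected minimum no longer collapses to a simple closed form (the asymmetric Skellam mean absolute difference is not as clean), which is precisely why the theorem leaves it as a series and the closed form emerges only in your symmetric case.
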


\begin{proof}
See the online supplement.
\end{proof}

Note that $\rho_N(\bh)$
is well defined at the origin since 
$\lim_{\bh \to \bm{0}} \rho_N(\bh)=1$
implying that the Poisson random field is  mean square continuous. Additionally, if $\rho(\bh)= 0$, then $\rho_N(\bh)=0$ and if $\lambda \rightarrow  \infty$ then $\rho_N(\bh)= \rho^2(\bh)$, {\it i.e.}, it converges to the correlation function of an exponential random field.

Following the graphical example given in Figure \ref{fig:poi}, we now compare the correlation functions of the proposed Poisson random field with the correlation of the Poisson LG random field, which is defined hierarchically by first  considering a LG random field $Z=\{Z(\bs), \bs \in A \}$ defined as  $Z(\bs)=e^{\mu+\sigma^2G(\bs)}$, where $G$ is a standard Gaussian random field with correlation $\rho(\bh)$, and then assuming $Y(\bs)\mid Z(\bs)\sim \mbox{Poisson}(Z(\bs))$ with $Y(\bs_i) \indep  Y(\bs_j) \mid Z$ for $i\neq j$. In this case, the first two moments of $Y(\bs)$ are given by
 $\E(Y(\bs))=e^{\mu+0.5\sigma^2}$ and $\mbox{Var}(Y(\bs))=\E(Y(\bs))(1+ \E(Y(\bs))(e^{\sigma^2}-1))$. Consequently, and following \cite{Aitchison}, the correlation function is given by $\rho_Y(\bh,\mu,\sigma^2)=1$ if $\bh=\mathbf{0}$ and
\begin{equation*}
\rho_Y(\bh,\mu,\sigma^2)=\frac{e^{\sigma^2 \rho(\bh)}-1}{e^{\sigma^2}-1+\E(Y(\bs))^{-1} },
\end{equation*}  
otherwise.
This correlation is discontinuous at the origin and the nugget effect is given by: 
$$\dfrac{\E(Y(\bs))^{-1}}{\E(Y(\bs))^{-1} +e^{\sigma^2 }-1}>0.$$
It is apparent that the marginal mean $\E(Y(\bs))$ has a strong impact on the nugget effect.

Figure \ref{covaa5} (a) depicts the correlation functions $\rho_Y(\bh,0.5,0.05)$, $\rho_Y(\bh,2.5,0.1)$, and $\rho_Y(\bh,4.5,0.2)$, which correspond to Poisson LG random fields with mean $\E(Y(\bs))=1.69,$ $12.81,$ and, $99.48$, respectively. As underlying correlation model we assume $\rho(\bh)=(1-||\bh||/0.5)^{4}_+$. It can be appreciated that for large mean values, the nugget effect is negligible. However, for small mean values, the nugget effect can be huge, and it is the cause of the ``whitening" effect observed in Figure \ref{fig:poi} (a).

Figure \ref{covaa5} (b) depicts the correlation function $\rho_N(\bh,\lambda)$ of the proposed Poisson random field using the same means and underlying correlation function
of the Poisson LG random field. It can be appreciated that the correlation  covers
the entire range between 0 and 1, irrespective of the mean values.

Finally, Figure \ref{covaa5} (c) depicts  the correlation function of the Poisson GC random field \citep{Han:Oliveira:2016}   $C=\{C(\bs), \bs \in A \}$
 defined as $C (\bs) = F_{\bs}^{-1}(\Phi(G(\bs)),\lambda)$, where $\Phi(\cdot)$ is the {\it cdf} of the standard Gaussian distribution and $F_{\bs}^{-1}(\cdot,\lambda  )$ is the quantile function of the Poisson distribution and $G$ is a standard Gaussian random field with correlation $\rho(\bh)$. The correlation function in this case is given by
\begin{equation*}
\rho_C(\bh,\lambda)=\int_{-\infty}^{\infty} \int_{-\infty}^{\infty} \lambda^{-1}F_{\bs_i}^{-1}(\Phi(z_i),\lambda)F_{\bs_j}^{-1}(\Phi(z_j),\lambda) \phi_2(z_i,z_j,\rho(\bh))dz_i dz_j-\lambda,
\end{equation*}
 where $\phi_2$ is the {\it pdf} of the bivariate standard Gaussian distribution.
 It is apparent that the Poisson GC correlation $\rho_C(\bh,\lambda)$ is much stronger than  $\rho_N(\bh,\lambda)$, and it does not seem to be affected by the different mean values.

It is important to stress that the Poisson and the Poisson GC random fields that are not mean-square continuous can be obtained by introducing a nugget effect, {\it i.e.}, a discontinuity at the origin of $\rho_{N}(\bh)$. This can be achieved by replacing the underlying correlation function
$\rho(\bh)$ with $\rho^*(\bh)=\rho(\bh)(1-\tau^2)+\tau^2\mathds{1}_{0}(||\bh||)$, where $0\leq\tau^2<1$ represents the underlying nugget effect.
 
\begin{figure} 
\scalebox{0.8}{
\begin{tabular}{ccc}
\includegraphics[width=0.3\textwidth]{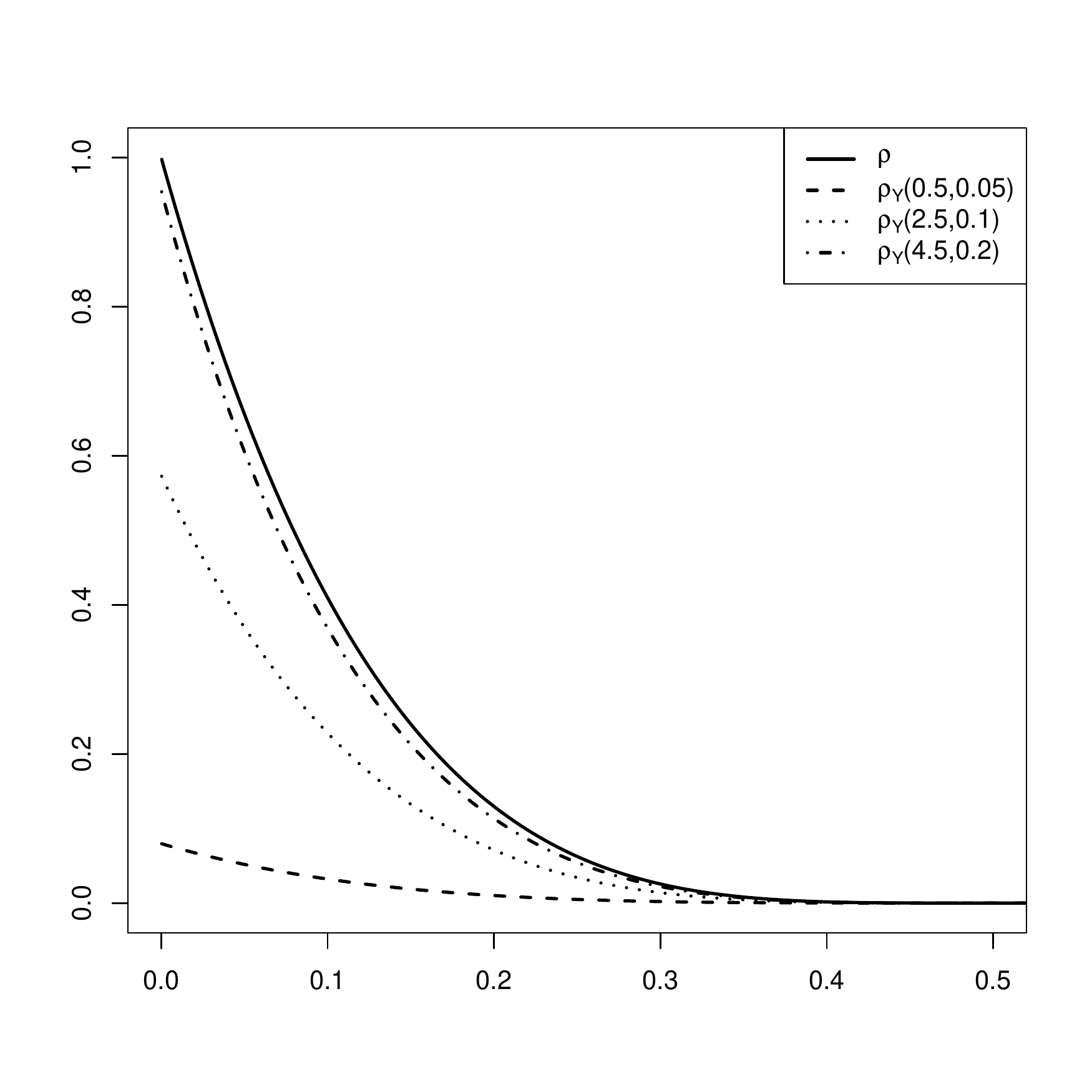}&\includegraphics[width=0.3\textwidth]{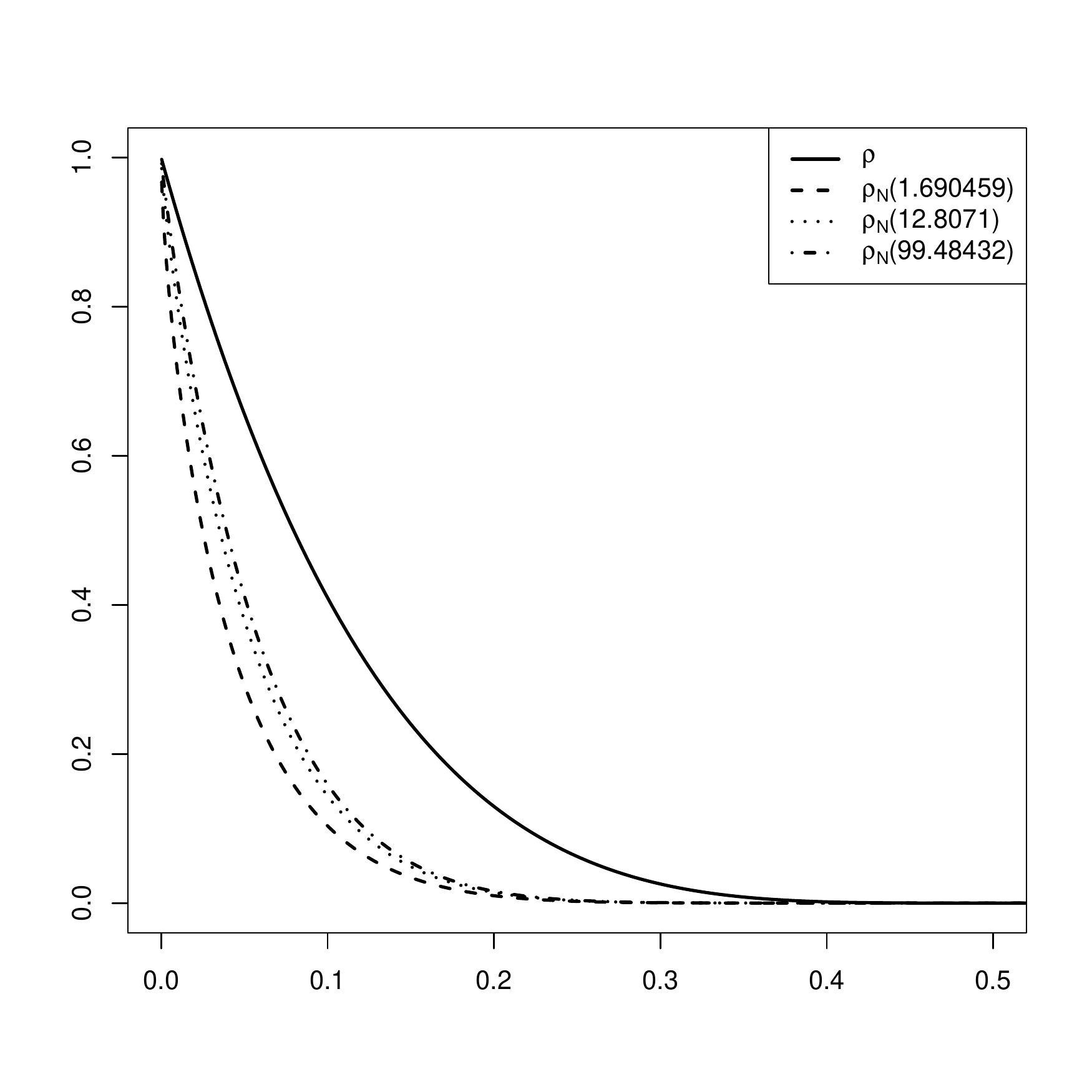}&\includegraphics[width=0.3\textwidth]{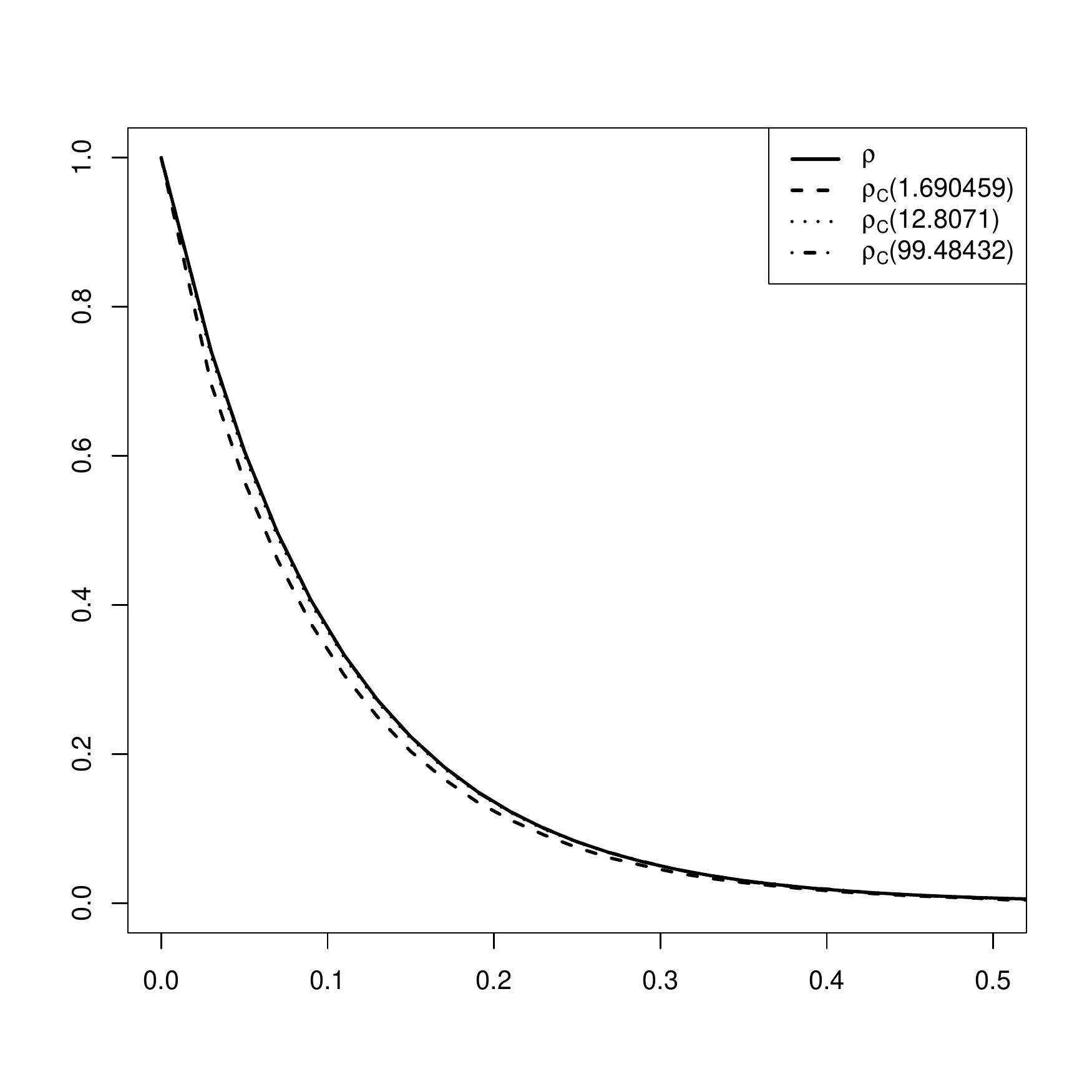}\\
(a)&(b)&(c)\\
\end{tabular}}
\caption{From left to right: 
(a) correlation functions $\rho_Y(\bh,\mu,\sigma^2)$ of the Poisson LG random field with  
$\mu=0.5, \sigma^2=0.05$ and $\mu=2.5, \sigma^2=0.1$, and $\mu=4.5, \sigma^2=0.2$;
(b) correlation function $\rho_N(\bh,\lambda)$ of our proposed Poisson random field for $\lambda=1.69,$ $12.81,$ and, $99.48$;
(c) correlation function $\rho_C(\bh,\lambda)$ of the Poisson GC random field for $\lambda=1.69,$ $12.81,$ and, $99.48$. The black line in the Figures depicts the underlying correlation model given by $\rho(\bh)=(1-||\bh||/0.5)^4)_+$. 
}\label{covaa5}
\end{figure}

\subsection{Bivariate distribution}\label{subsec:3.2}  

In this section, we provide the bivariate distribution of the Poisson random field. This distribution can be written in terms of an infinite series depending on the regularized lower incomplete Gamma function defined in \eqref{incgamma} and the regularized hypergeometric confluent function \citep{Gradshteyn:Ryzhik:2007}, defined as: 
\begin{equation*}
{}_1\widetilde{\mathrm{F}}_{1}(a;b;x)=\frac{{}_1\mathrm{F}_{1}(a;b;x)}{\Gamma(b)}=\sum\limits_{k=0}^{\infty}\dfrac{(a)_k x^k}{\Gamma(b+k)k!},
\end{equation*}
where ${}_1\mathrm{F}_{1}$ is the standard hypergeometric confluent function.

For the sake of simplicity, we analyze the following cases separately: (a) $n=m=0$, (b) $n=0, m\geq1$ and $m=0, n\geq1$, (c) $n=m=1,2 \ldots$,  and (d) $n,m\geq 1$, $n\neq m$. Moreover, we set $p_{nm}=\Pr(N(\bs_i)=n,N(\bs_j)=m)$ , $\lambda_i=\lambda(\bs_i)$,  $\lambda_j=\lambda(\bs_j)$ and $\rho=\rho(\bh)$ for notational convenience. We additionally define the function $\mathcal{S}$ as follows:
\begin{equation*}
\mathcal{S}\left(
        \begin{smallmatrix}
  	a &; b\\
  	& c
	\end{smallmatrix},
        x,x' \right) = {}_1\widetilde{\mathrm{F}}_{1}(a;b;x)\gamma^{\ast}\left(c,x' \right).
\end{equation*}
\begin{theo}\label{theopdf}
Let $N$ be a Poisson random field with underlying correlation $\rho$ and mean $\E(N(\bs_k))=\lambda_k$. Then the bivariate distribution $p_{nm}$ is given by:

\noindent (a) Case $n =m = 0$:
\begin{align*}p_{00}=-1+e^{-\lambda_i}+e^{-\lambda_j}
          +(1-\rho^2)\sum\limits_{k=0}^{\infty}\rho^{2k}\gamma^{\star}\left(k+1,\dfrac{\lambda_i}{1-\rho^2},\dfrac{\lambda_j }{1-\rho^2}\right).
          \end{align*}
\noindent (b) Cases $n \geq 1 , m=0$ and $m \geq 1 , n=0$: 
$$p_{n0}=g(n,\lambda_i,\lambda_j,\rho), \quad p_{0m}=g(m,\lambda_j,\lambda_i,\rho),$$respectively, where
\begin{align*}
g(b,x,y,\rho)=\dfrac{x^b}{b!}e^{-x}
        -x^b e^{-\frac{x}{1-\rho^2}}
        \sum\limits_{\ell=0}^{\infty} \left(\dfrac{\rho^2x}{1-\rho^2}\right)^{\ell}
        \mathcal{S}\left(
        \begin{smallmatrix}
  	b &; b+\ell+1\\
  	& \ell+1
	\end{smallmatrix},
        \dfrac{\rho^2x}{1-\rho^2},\dfrac{y}{1-\rho^2}\right).
        \end{align*}

\noindent (c) Case $n =m \geq 1$:
\begin{align*}
p_{nn}=&-(1-\rho^2)^n\sum\limits_{k=0}^{\infty}\dfrac{\rho^{2k}(n)_k}{k!}\gamma^{\star}\left(n+k,\dfrac{\lambda_i}{1-\rho^2},\dfrac{\lambda_j}{1-\rho^2}\right) \\
       & +\left(\dfrac{1-\rho^2}{\rho^{2}}\right)^n\sum\limits_{k=0}^{\infty}\sum\limits_{\ell=0}^{1}\dfrac{(n)_k}{k!}e^{-\lsi(1-\ell)-\lsj\ell}\gamma^{\star}\left(n+k,\dfrac{\rho^{2(1-\ell)}\lambda_i}{1-\rho^2},\dfrac{\rho^{2\ell}\lambda_j}{1-\rho^2}\right)\\
      &  +(1-\rho^2)^{n+1}\sum\limits_{k=0}^{\infty}\sum\limits_{\ell=0}^{\infty}\dfrac{\rho^{2k+2\ell}(n)_{\ell}}{\ell!}\gamma^{\star}\left(n+\ell+k+1,\dfrac{\lambda_i}{1-\rho^2},\dfrac{\lambda_j}{1-\rho^2}\right).
\end{align*}

\noindent (d) Cases $n \geq 2 , m\geq 1$ with $n>m$, and $m \geq 2 , n\geq 1$ with $m>n$, $$p_{nm}=h(n,m,\lambda_i,\lambda_j,\rho), \quad p_{nm}=h(m,n,\lambda_j,\lambda_i,\rho),$$respectively, where
\begin{align*}
h(a,b,x,y,\rho)=&  x^{m}e^{-\frac{x}{1-\rho^2}}
        \Bigg[\sum\limits_{\ell=0}^{\infty}\dfrac{(b)_\ell}{\ell!}\left(\dfrac{\rho^2x}{1-\rho^2}\right)^{\ell} 	\mathcal{S}\left(
        \begin{smallmatrix}
  	a-b+1 &; a+\ell+1\\
  	& b+\ell
	\end{smallmatrix},
        \dfrac{\rho^2x}{1-\rho^2},\dfrac{y}{1-\rho^2}\right) \\
   &     -\sum\limits_{k=0}^{\infty}\sum\limits_{\ell=0}^{\infty}\dfrac{(b)_\ell}{\ell!}\left(\dfrac{\rho^2x}{1-\rho^2}\right)^{k+\ell}
        \mathcal{S}\left(
        \begin{smallmatrix}
  	a-b &; a+k+\ell+1\\
  	& b+k+\ell+1
	\end{smallmatrix},
        \dfrac{\rho^2x}{1-\rho^2},\dfrac{y}{1-\rho^2}\right)\Bigg].
\end{align*}
\end{theo}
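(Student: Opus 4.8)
The plan is to reduce the bivariate count probability to a signed combination of joint distribution functions of the underlying correlated Erlang partial sums, and then to evaluate those through the Kibble bivariate-gamma structure inherited by the copies of \eqref{gamma}. Throughout set $t=1$, $S_0\equiv 0$, $\lambda_i=\lambda(\bs_i)$, $\lambda_j=\lambda(\bs_j)$ and $\rho=\rho(\bh)$. First, using \eqref{www} and $\{N(\bs)=n\}=\{S_n(\bs)\le 1<S_{n+1}(\bs)\}$, I would write $\I\{N(\bs)=n\}=\I\{S_n(\bs)\le 1\}-\I\{S_{n+1}(\bs)\le 1\}$; multiplying the indicators for $\bs_i,\bs_j$ and taking expectations gives $p_{nm}=H_{n,m}-H_{n,m+1}-H_{n+1,m}+H_{n+1,m+1}$, where $H_{a,b}:=\Pr(S_a(\bs_i)\le 1,\,S_b(\bs_j)\le 1)$. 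When $n=0$ or $m=0$ the convention $S_0=0$ makes one marginal indicator identically $1$, collapsing two terms into the Poisson marginal $e^{-\lambda_i}\lambda_i^{\,n}/n!$ of \eqref{kjk}; this already produces the explicit $e^{-\lambda_i}$, $e^{-\lambda_j}$ in case (a) and the leading $x^{b}e^{-x}/b!$ in case (b).

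Next I would determine the joint law of the correlated partial sums. Because $S_a(\bs)=\frac{1}{2\lambda(\bs)}\sum_{k=1}^{a}(G_{k,1}^2(\bs)+G_{k,2}^2(\bs))$ is assembled from $a$ independent copies of the bivariate exponential field $W$, the pair $(S_a(\bs_i),S_a(\bs_j))$ is Kibble bivariate gamma of shape $a$, rates $\lambda_i,\lambda_j$ and dependence $\rho^2$. I would exploit its negative-binomial mixture: conditional on a latent $K$ with $\Pr(K=k)=\frac{(a)_k}{k!}\rho^{2k}(1-\rho^2)^{a}$, the two sums are \emph{independent} $\mathrm{Gamma}(a+k,\lambda_i/(1-\rho^2))$ and $\mathrm{Gamma}(a+k,\lambda_j/(1-\rho^2))$ (the mixture reproduces the correct $\mathrm{Gamma}(a,\lambda_i)$ marginal via $(a)_k/\Gamma(a+k)=1/\Gamma(a)$). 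Conditional independence turns the equal-shape evaluation into a product of incomplete-gamma CDFs, yielding $H_{a,a}=(1-\rho^2)^{a}\sum_{k\ge 0}\frac{(a)_k}{k!}\rho^{2k}\gamma^{\star}\!\left(a+k,\frac{\lambda_i}{1-\rho^2},\frac{\lambda_j}{1-\rho^2}\right)$. With $a=1$ this is exactly case (a), and $H_{n,n},H_{n+1,n+1}$ supply the diagonal pieces of case (c).

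For unequal shapes, say $a>b$, I would split $S_a(\bs_i)=S_b(\bs_i)+T$ with $T\sim\mathrm{Gamma}(a-b,\lambda_i)$ independent of the correlated block $(S_b(\bs_i),S_b(\bs_j))$. Applying the shape-$b$ mixture, the $\bs_j$ coordinate still contributes $\gamma^{*}(b+k,\lambda_j/(1-\rho^2))$, whereas the $\bs_i$ coordinate requires the CDF of a sum of two independent gammas with the \emph{different} rates $\lambda_i/(1-\rho^2)$ and $\lambda_i$. Here Kummer's integral $\int_0^{s}x^{p-1}(s-x)^{q-1}e^{-cx}\,dx=s^{p+q-1}B(p,q)\,{}_1\mathrm{F}_1(p;p+q;-cs)$ applies with $c=\lambda_i-\lambda_i/(1-\rho^2)$, whose magnitude $\rho^{2}\lambda_i/(1-\rho^2)$ is precisely the argument appearing inside ${}_1\widetilde{\mathrm{F}}_1$ in $g$ and $h$; integrating its power series term-by-term against $\int_0^{1}s^{p+q-1+\ell}e^{-\lambda_i s}\,ds$ (a lower incomplete gamma) reconstitutes the factor $\mathcal{S}(\,\cdots\,)={}_1\widetilde{\mathrm{F}}_1\cdot\gamma^{*}$, with the Pochhammer symbols $(b)_\ell$ and $a-b+1$ emerging as the hypergeometric parameters. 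This is the mechanism producing $g$ in case (b) and $h$ in case (d).

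Finally I would substitute the four $H$-terms into the inclusion--exclusion identity, justify interchanging the mixture sum, the hypergeometric sum and the $s$-integration by absolute convergence (each summand is a difference of probabilities bounded in $[0,1]$), and collect coefficients. The hard part will be exactly this bookkeeping in the unequal-shape regime: showing that the double and triple series arising from the product of the negative-binomial weights, the ${}_1\mathrm{F}_1$ expansion of the different-rate gamma-sum CDF and the excess-gamma convolution rearrange---after index shifts and repeated use of $(a)_k/\Gamma(a+k)=1/\Gamma(a)$---into precisely the stated $\gamma^{\star}$ and $\mathcal{S}$ combinations, including reconciling the regrouping that makes the diagonal contribution in case (c) carry a leading minus sign.
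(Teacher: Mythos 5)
Your proposal is correct and is essentially the paper's own argument: it starts from Hunter's two-dimensional renewal identity $p_{nm}=H_{n,m}-H_{n,m+1}-H_{n+1,m}+H_{n+1,m+1}$ with $H_{a,b}=\Pr(S_a(\bs_i)\le 1,\,S_b(\bs_j)\le 1)$, evaluates the equal-index terms through the Kibble bivariate gamma law of the partial sums (your negative-binomial mixture is precisely the Bessel-series expansion of that density), and reduces unequal-index terms to a correlated shape-$b$ block plus an independent Gamma$(a-b,\lambda_i)$ increment whose different-rate convolution is resolved by Kummer-type integrals, which is exactly where the ${}_1\widetilde{\mathrm{F}}_1$ factors with argument $\rho^2x/(1-\rho^2)$ and the $\gamma^{\star}$ products in $g$, $h$ and case (c) originate. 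The only work you defer---rearranging the resulting multiple series into the exact stated forms, including the sign regrouping that puts $-H_{n,n}$ in front in case (c)---is the same mechanical bookkeeping the paper carries out, so nothing essential is missing.
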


\begin{proof}
See the online supplement.
\end{proof}

The evaluation of the bivariate distribution can be troublesome at first sight. However, it can be performed by truncating the series and considering that efficient numerical computation of the regularized lower incomplete gamma and hypergeometric confluent functions can be found
in different libraries, such as the GNU scientific library \citep{gough2009gnu} and the most important
statistical softwares including \texttt{R}, \textsc{Matlab}, and Python. In particular, the \texttt{R} package Geomodels  \citep{Bevilacqua:2018aa} uses the Python implementations in the SciPy library \citep{2020SciPy-NMeth}.

The bivariate distribution can be written as the product of two independent Poisson distributions when $\rho_N(\bh)=0$. This result, provided by \cite{Hunter:1974} in Theorem 3.6, establishes that the independence of two renewal counting processes is equivalent to a zero correlation between them. As outlined in Section \ref{sec:3.1}, $\rho(\bh)=0$ implies $\rho_N(\bh)=0$. Consequently, pairwise independence at the level of the underlying Gaussian random field implies pairwise independence for the Poisson random field.

We now compare the type of bivariate dependence induced by the proposed model and the GC one
when $\lambda=5$. Figure (\ref{ddddd}) (from left to right) presents the bivariate GC distribution, the bivariate Poisson distribution in Theorem \ref{theopdf} and a coloured image representing the differences between them. Note that a positive value of the difference implies that probabilities associated with bivariate distribution in Theorem \ref{theopdf} are greater than the probabilities of the bivariate GC one. Only the probabilities $\Pr(N(\bs_i)=n,N(\bs_j)=m)$ for $n,m=0,1,\ldots, 12$ are considered in the plots. The first, second and third row consider increasing levels of underlying correlations $\rho(\bh)=0.1, 0.5, 0.9$. 
It can be appreciated that the larger the correlation, the more significant is the difference between the proposed and Gaussian copula bivariate distributions. In addition, there is a pattern in which the probabilities of the GC bivariate distribution tend to be larger along the diagonal, {\it i.e.}, the blue scale color is predominant along the diagonal.
This is not surprising since the Poisson GC bivariate distribution inherits the type of dependence of the bivariate Gaussian distribution.


 \begin{figure} [ht!]
 \vspace{-0.5cm}
\scalebox{0.9}{
\begin{tabular}{ccc}
\includegraphics[width=0.3\textwidth]{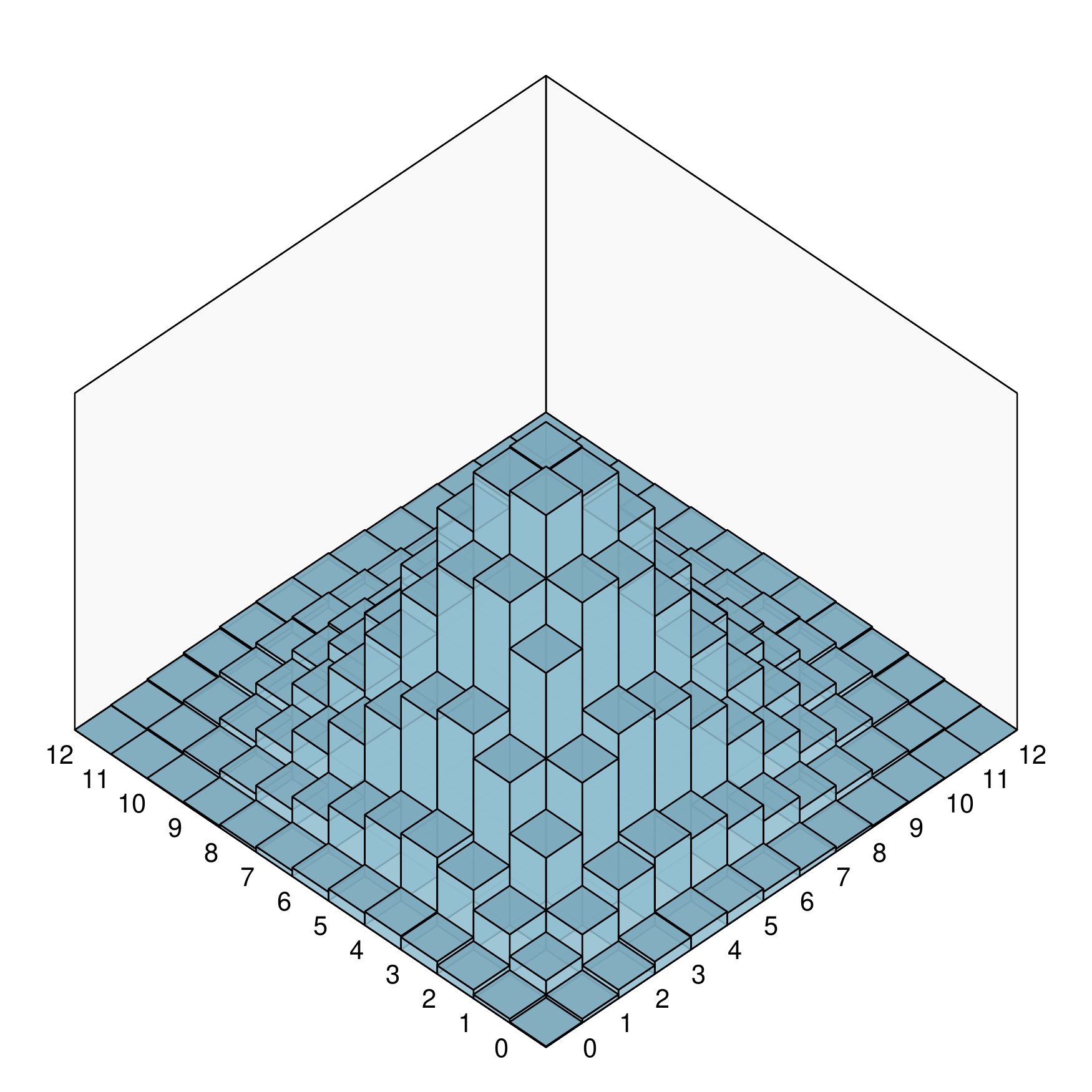}&\includegraphics[width=0.3\textwidth]{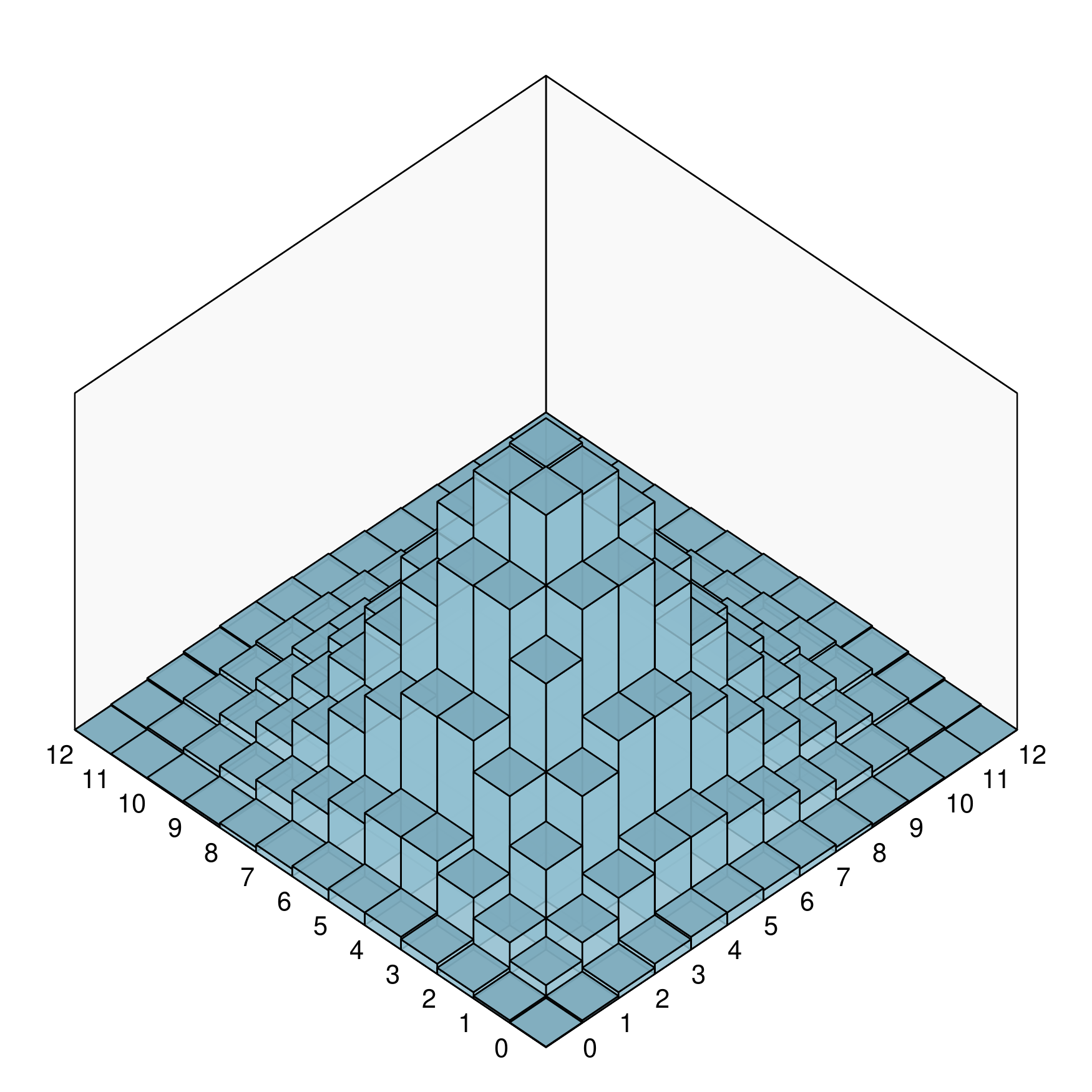}&\includegraphics[width=0.33\textwidth]{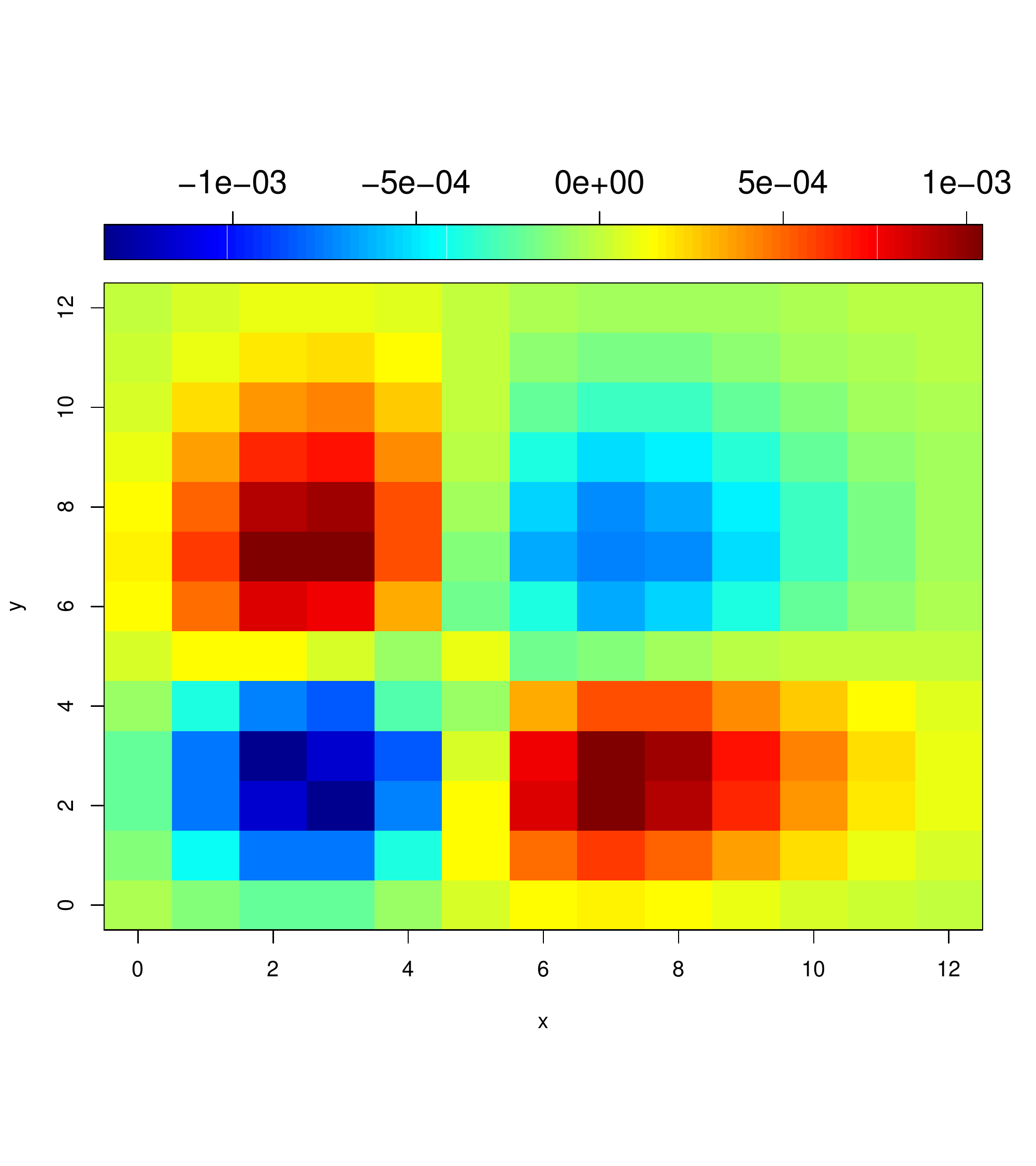}\\ \vspace{-0.3cm}
\includegraphics[width=0.3\textwidth]{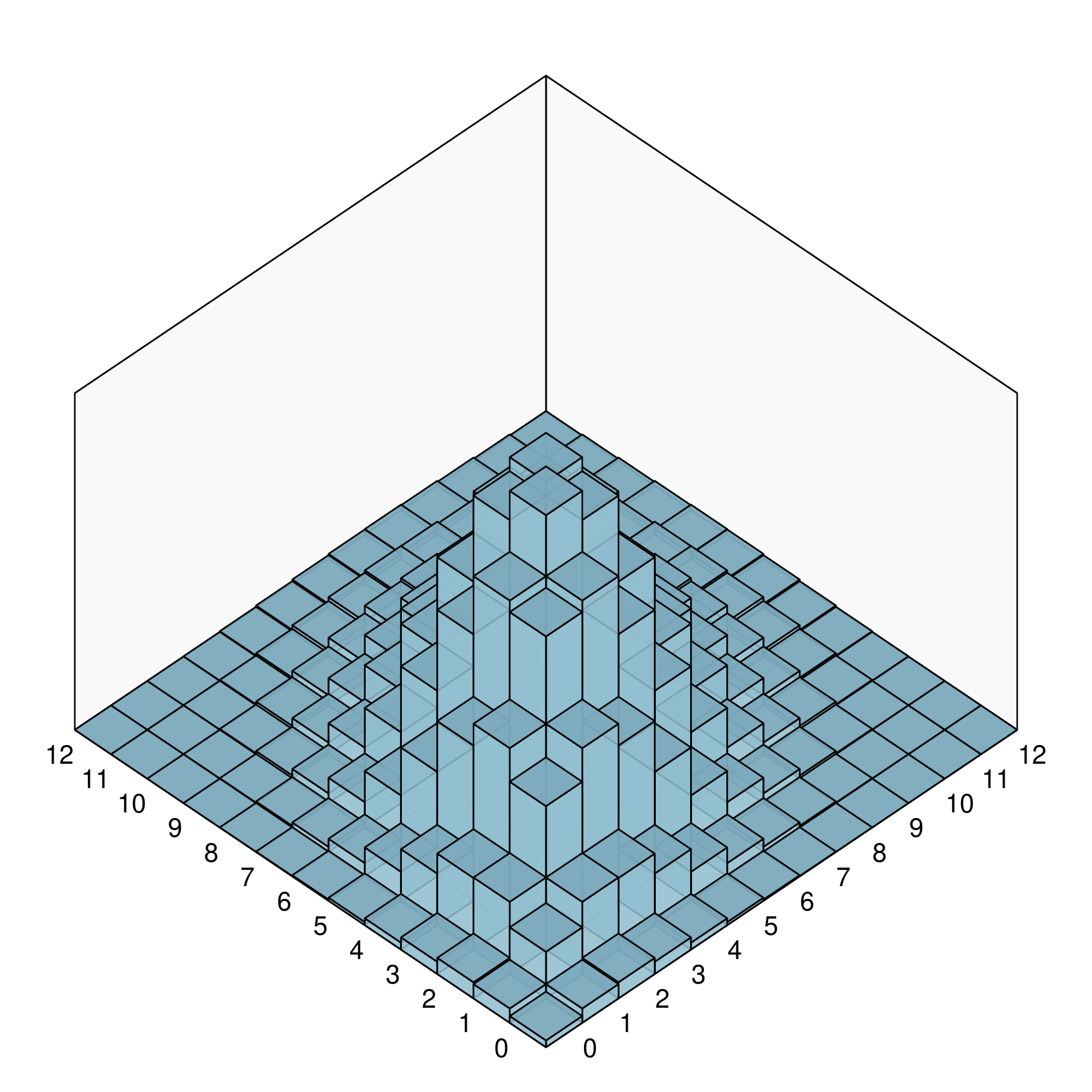}&\includegraphics[width=0.3\textwidth]{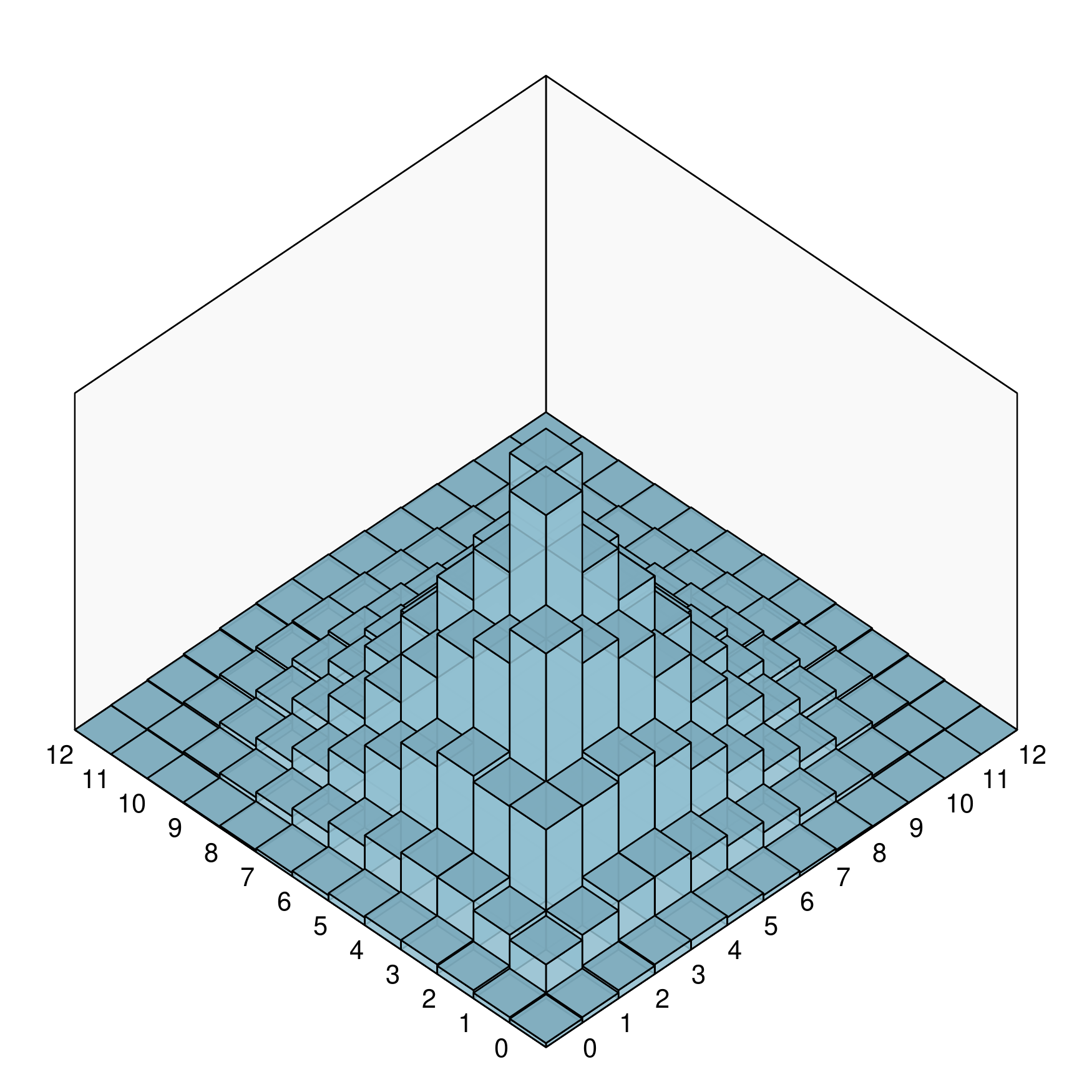}&\includegraphics[width=0.33\textwidth]{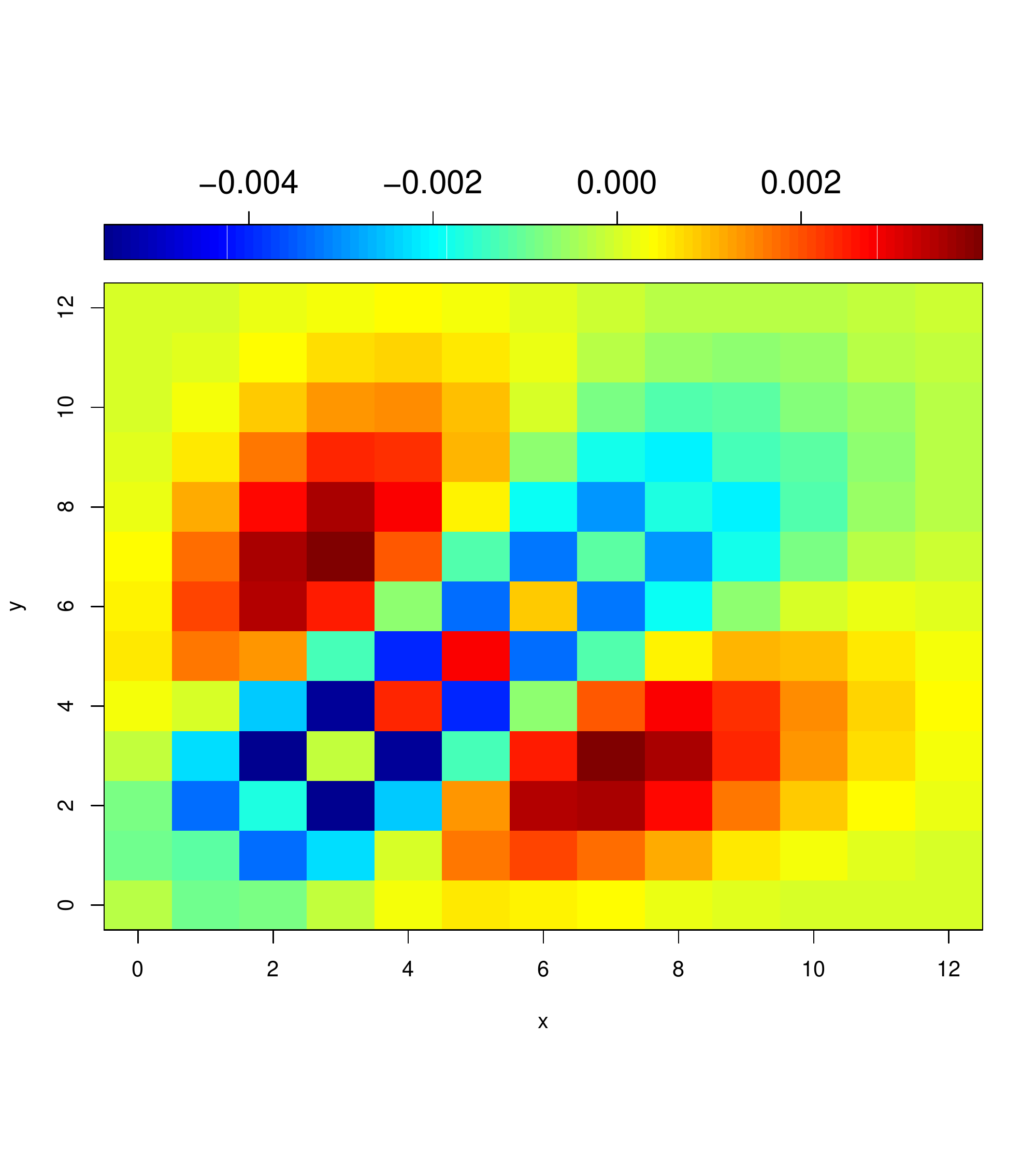}\\ \vspace{-0.3cm}
\includegraphics[width=0.3\textwidth]{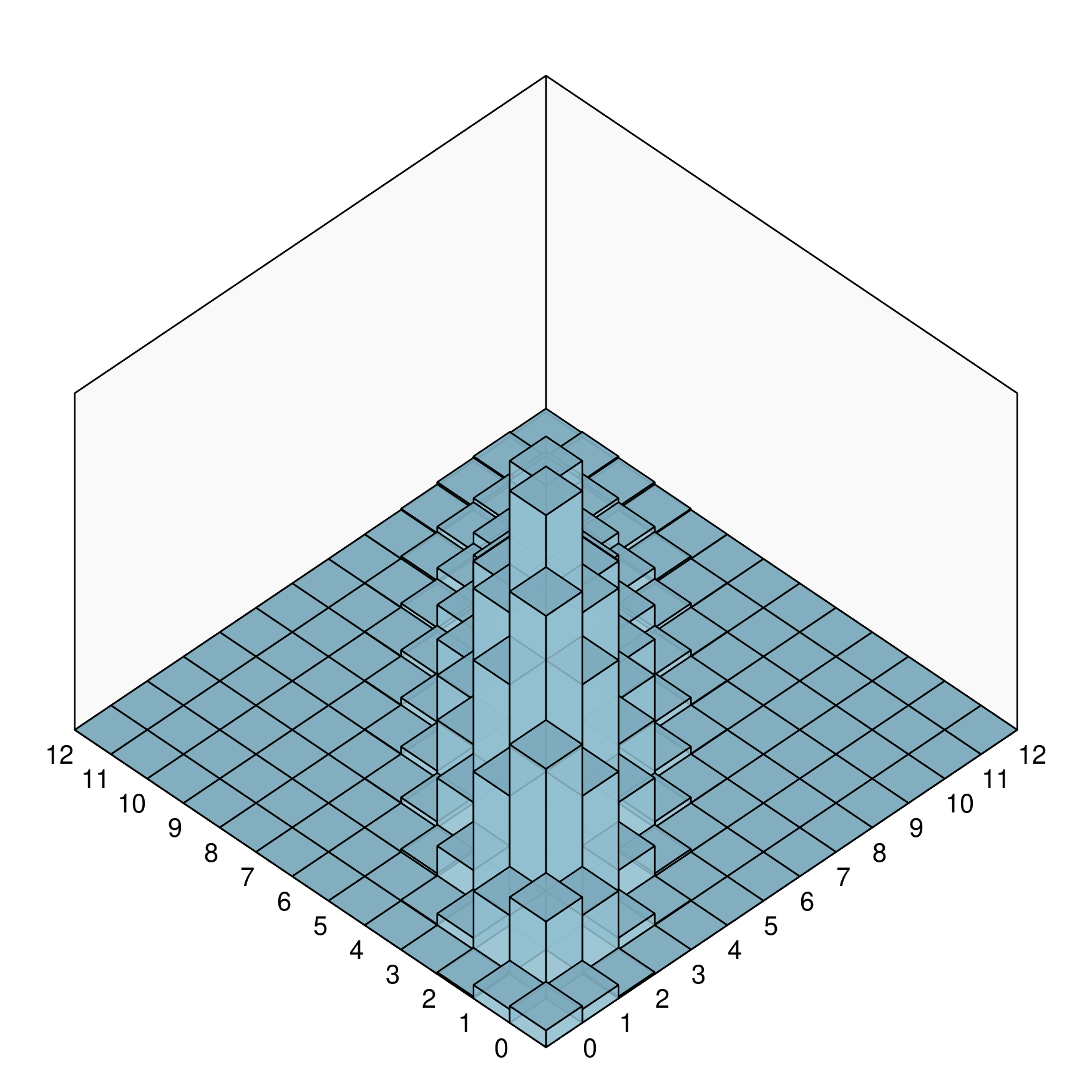}&\includegraphics[width=0.3\textwidth]{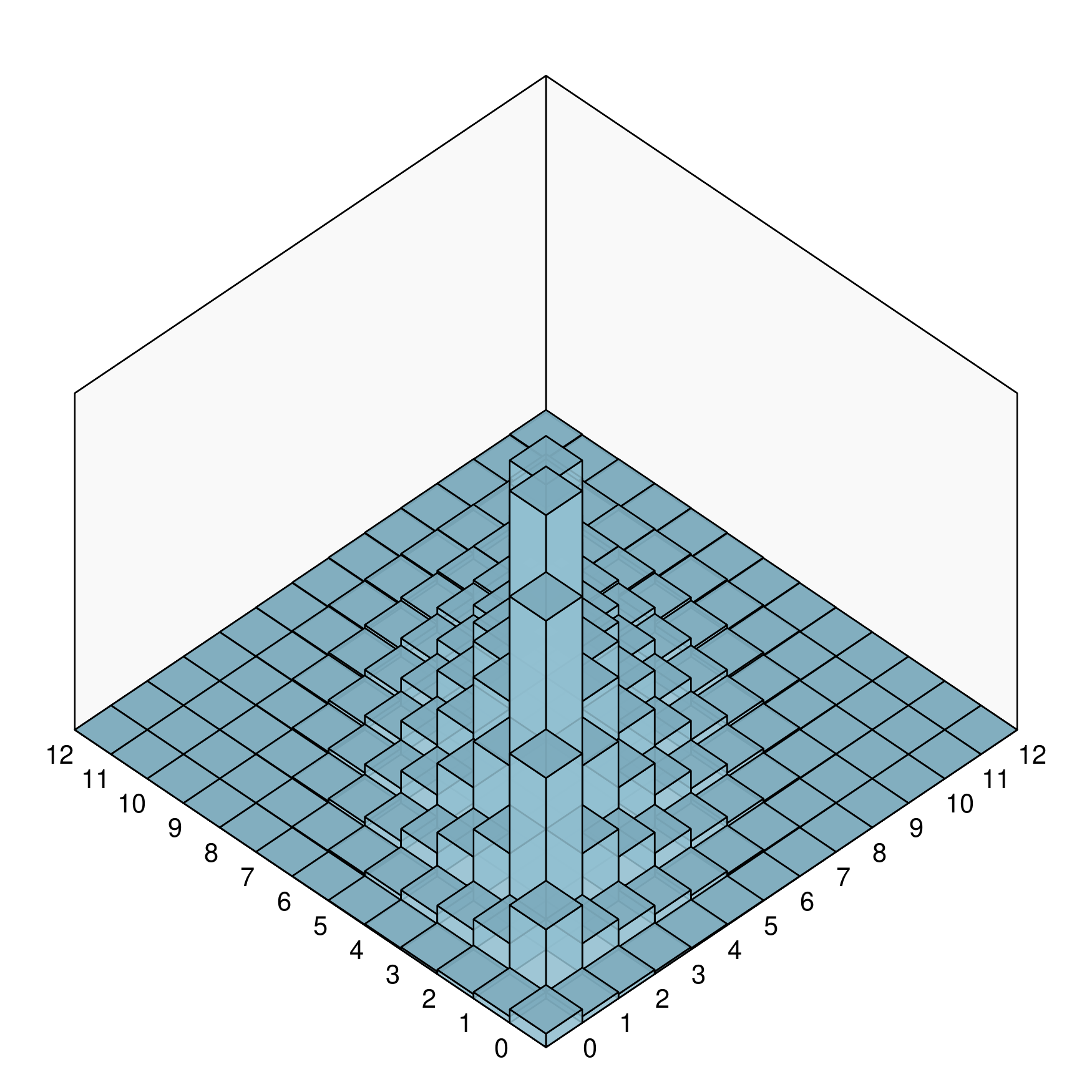}&\includegraphics[width=0.33\textwidth]{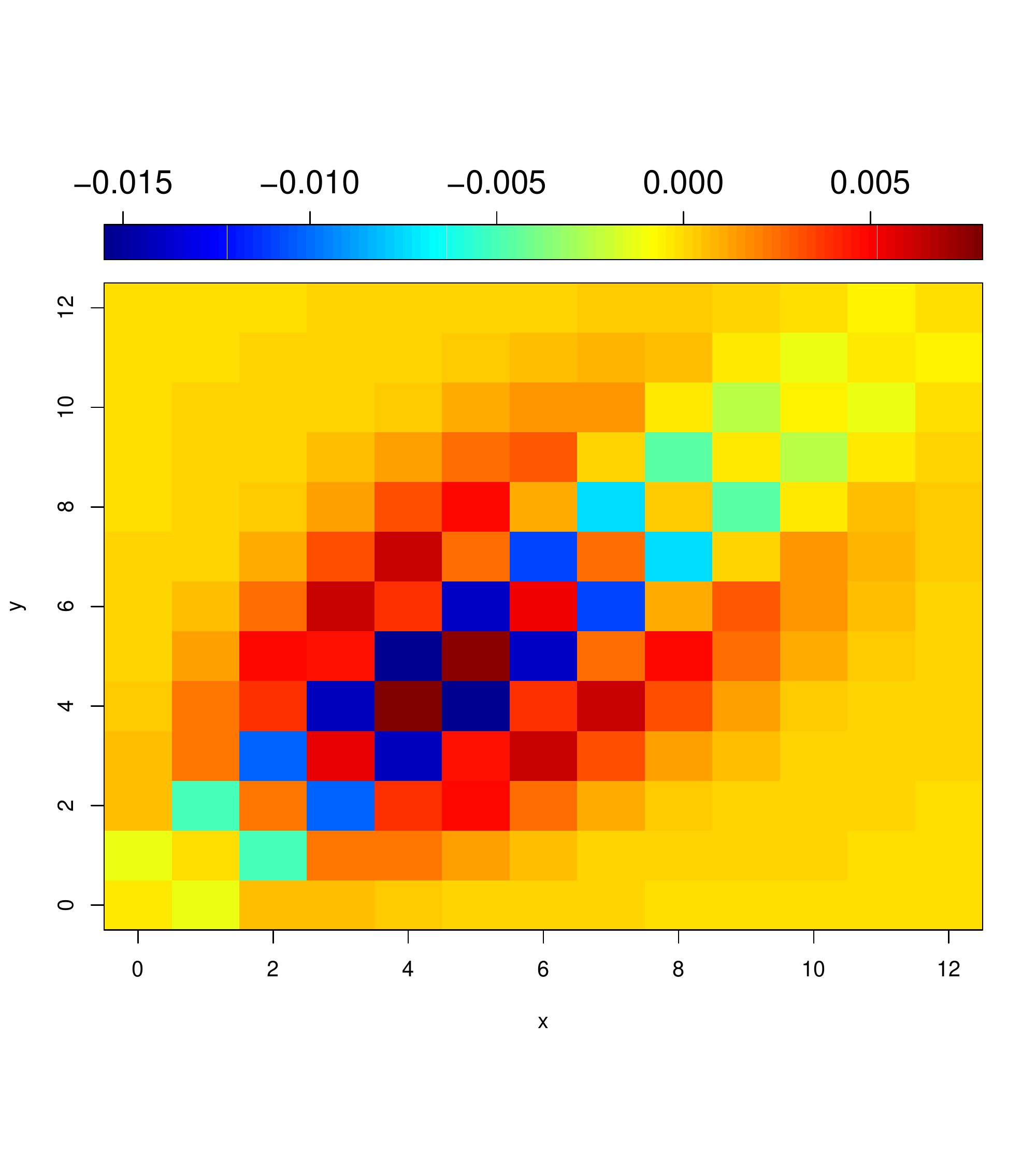}\\
(a)&(b)&(c)\\
\end{tabular}}
\caption{For each row (from left to right): bivariate Poisson GC distribution, our proposed bivariate Poisson distribution and the difference between them. The first, second and third row are obtained setting $\rho(\bh)=0.1,0.5, 0.9$ for the underlying correlation.}\label{ddddd}
\end{figure}

\subsection{A zero-inflated extension}\label{subsec:3.3} 
In this Section we provide an extension of the proposed model
for spatial data which exhibit an excessive number of zeros.
Specifically, let $B = \{B(\bs),\bs\in A\}$, a Bernoulli random field such that $B(\bs)=\mathds{1}_{(-\infty,0)}(G(\bs))$, where $G$ is a Gaussian random field with $\mathbb{E}(G(\bs)) = \theta(\bs)$, unit variance and correlation function $\rho_1(\bh)$. 
The marginal probability of having a zero is then given by:
\vspace{-0.5cm}
\[p(\bs):=\Pr(B(\bs)=0)=\Phi(\theta(\bs)),\vspace{-0.5cm}\]
where $\Phi$ is the univariate standard Gaussian {\it cdf}.   
Let $N$ be a Poisson random field with $\mathbb{E}(N(\bs)) = \lambda(\bs)$ and underlying correlation $\rho_2(\bh)$. Assuming $B$ and $N$ are independent, the proposed Poisson zero-inflated model is then given by a random field $Y= \{Y(\bs), \bs\in A\}$ defined as:
\vspace{-0.5cm}
\begin{equation}\label{ppppp}
Y(\bs):=B(\bs)N(\bs),
\vspace{-0.5cm}
\end{equation}	
with marginal distribution given by:  
\begin{equation}\label{mzip}
\Pr(Y(\bs)=y(\bs))=
	\begin{cases}p(\bs)+(1-p(\bs))e^{-\lambda(\bs)}
	&\mbox{if} \quad y(\bs)=0\\
(1-p(\bs))\dfrac{\lambda(\bs)^{y(\bs)}e^{-\lambda(\bs)}}{y(\bs)!} & \mbox{if} \quad    y(\bs)=1,2,\ldots \end{cases},
\end{equation}	

and with $\E(Y(\bs))=(1-p(\bs))\lambda(\bs)$ and $\Var(Y(\bs))=\E(Y(\bs))[1+\frac{p(\bs)}{1-p(\bs)}\E(Y(\bs))]$. 
Note that the zero-inflated Poisson random field is overdispersed and when $p(\bs)\to 0$ then the Poisson random field is obtained as special case. For the sake of completeness, we provide the bivariate distribution and the correlation function of the zero-inflated Poisson random field in Proposition 1 (see the online supplement).

\section{Estimation and prediction}\label{sec:4}

In this section, we start by describing the weighted pairwise likelihood ({\it wpl}) estimation method; then, we focus on the optimal linear prediction. 

\subsection{Weighted pairwise likelihood estimation}\label{opop}

Composite likelihood is a general class of objective functions that combine low-dimensional terms based on the likelihood of marginal or conditional events to construct a pseudo likelihood \cite{Lindsay:1988,Varin:Reid:Firth:2011}. A particular case of the composite likelihood class is the pairwise likelihood \citep[see for example][for application of pairwise likelihood  in the spatial setting] {Heagerty:Lele:1998,Bevilacqua:Gaetan:2015,ALEGRIA2017,Bev:2020} that combines the bivariate distributions of all possible distinct pairs of
observations. Let $\bm{N}=(n_1,n_2,\ldots,n_l)^{\top}$ be a realization of the Poisson random field $N$ observed at distinct spatial locations $\bs_1,\bs_2,\ldots,\bs_l$, $\bs_i\in A$ and let $\bm{\theta}=(\bm{\beta}^{\top},\bm{\alpha}^{\top})$ be the vector of unknown parameters where $\bm{\alpha}$ is the vector parameter associated with the underlying correlation model and $\bm{\beta}$ the regression parameters. The pairwise likelihood function is defined as follows:
\begin{equation*}\label{ppl}
\mathrm{pl}(\bm{\theta}):= \sum\limits_{i=1}^{l-1}\sum\limits_{j=i+1}^{l}\log(\Pr(N(\bs_i)=n_i,N(\bs_j)=n_j))\zeta_{ij},
\end{equation*}  
where $\Pr(N(\bs_i)=n_i,N(\bs_j)=n_j)$ is the bivariate density given in Theorem \ref{theopdf} and $\zeta_{ij}$ is a non-negative suitable
weight. The choice of cut-off weights, namely,
\begin{equation}\label{wer}
\zeta_{ij}= 
	\begin{cases}
	1 	&\parallel\bs_i-\bs_j \parallel\leq \xi \\
	0 & \text{otherwise}  \end{cases},
\end{equation}
for a positive value of $\xi$, can be motivated by its simplicity and by observing that the dependence between distant observations is weak \citep{Joe:Lee:2009}. Some guidelines on the choice of $\xi$
can be found in \cite{Bevilacqua:Gaetan:Mateu:Porcu:2012,Bevilacqua:Gaetan:2015}.

The maximum weighted pairwise likelihood ({\it wpl}) estimator is given by:
\vspace{-0.5cm}
\begin{equation*}
\widehat{\bm{\theta}}:=\operatorname{argmax}_{\bm{\theta}}\, \operatorname{pl}(\bm{\theta}).
\vspace{-0.5cm}
\end{equation*}
Under some mixing conditions of the Poisson random field, following \cite{Bevilacqua:Gaetan:2015}, it can be shown that, when increasing domain asymptotic, $\widehat{\bm{\theta}}$ is consistent and asymptotically Gaussian distributed, with the  covariance matrix given by $\mathcal{G}^{-1}_n(\bm{\theta})$, {\it i.e.,} the inverse of the Godambe information
$\mathcal{G}_n(\bm{\theta}):=\mathcal{H}_n(\bm{\theta})\mathcal{J}_n(\bm{\theta})^{-1}\mathcal{H}_n(\bm{\theta}),
$
where
$\mathcal{H}_n(\bm{\theta}):=\E[-\nabla^2 \operatorname{pl}(\bm{\theta})]$ and $\mathcal{J}_n(\bm{\theta}):={\mbox{Var}}[\nabla \operatorname{pl}(\bm{\theta})]$. The standard error estimation can be obtained from the square root diagonal elements of $\mathcal{G}^{-1}_n(\widehat{\bm{\theta}})$.

It is important to stress that the computation of the standard errors requires the evaluation of the matrices $\mathcal{H}_n(\hat{\bm{\theta}})$ and $\mathcal{J}_n(\hat{\bm{\theta}})$. However, the evaluation of $\mathcal{J}_n(\hat{\bm{\theta}})$ is computationally
unfeasible for large datasets, and in this case, sub-sampling techniques can be used, as in \cite{Heagerty:Lele:1998} and \cite{Bevilacqua:Gaetan:Mateu:Porcu:2012}. A straightforward and more robust alternative is the parametric bootstrap estimation of $\mathcal{G}^{-1}_n(\bm{\theta})$ \citep{Bai:Kang:Song:2014}.

Another critical issue related to large datasets is that computation of the {\it wpl} estimator can be computationally demanding 
due to the computational complexity associated with the bivariate Poisson distribution given in Theorem \ref{theopdf}.
An estimator that requires a smaller computational burden can be obtained under
Gaussian misspecification \citep{AOS1121,cppp,Bev:2020}. This is a useful inferential tool when the likelihood function cannot be calculated for some reason, but the first two moments and the correlation are known. 

In our case, we assume a non-stationary Gaussian random field with mean and variance equal to $\lambda(\bs)$ and correlation $\rho_{N}(\bs_i,\bs_j)$ given in Theorem \ref{Theo1}. 
Then the  misspecified maximum $wpl$ requires the computation of the Gaussian bivariate distribution and  the misspecified standard maximum likelihood estimation 
requires the computation of the Gaussian multivariate distribution.
In both cases, evaluation of the Gamma incomplete function or the modified Bessel function (in the stationary case) is required to compute the covariance matrix. Table 1, in the online supplement, shows the computational cost of each estimation procedure through different scenarios for the stationary case, to demonstrate the computational gains of the Gaussian misspecified estimation with respect to the Poisson {\it wpl} estimation.

\subsection{Optimal linear prediction}\label{olpred}

The Poisson random field's optimal predictor concerning the mean squared error criterion requires the knowledge of the finite-dimensional distribution, which is not available for the Poisson random field. As in the estimation step, once again, the Gaussian misspecification allows to build 
the best linear unbiased predictor (BLUP) based on the correlation of the Poisson random field given in Theorem \ref{Theo1}. Specifically, if the goal is the prediction of $N$ at $\bs_0$ given the vector of spatial observations $\bN$ observed at $\bs_1,\bs_2,\ldots,\bs_l$, then the optimal linear Gaussian prediction is given by: 
\begin{equation}\label{pyt}
\widehat{N(\bs_0)}=\lambda(\bs_0)+ \bm{c}^\top \Sigma^{-1}(\bm{N}-\bm{\lambda}),
\end{equation}
where $\bm{\lambda}=(\lambda(\bs_1),\ldots,\lambda(\bs_l))^\top$, $\bm{c}=[\sqrt{\lambda(\bs_0)\lambda(\bs_i)}\rho_{N}(\bs_0,\bs_i)]_{i=1}^l$ and
$\Sigma=\sqrt{\bm{\lambda} \bm{\lambda}^\top} \odot [\rho_{N}(\bs_i,\bs_j)]_{i,j=1}^l$ 
is the variance-covariance matrix ($\odot$ the matrix Schur product). In practice, the mean and covariance matrix are not known and must be estimated. The associated mean squared error is:
\vspace{-0.5cm}
\begin{equation*}
\mbox{MSE}(\widehat{n(\bs_0)})=\lambda(\bs_0)  - \bm{c}^\top\Sigma^{-1}\bm{c}.
\end{equation*}

Note that this kind of prediction does not guarantee the positivity and discreteness of the prediction. However, optimal linear prediction can generally be a useful approximation of the optimal predictor, as was shown, for example, in \cite{DeOliveira:2006} and recently in \cite{Bevilacqua:2018ab}.
When $l$ is large the use of compactly supported correlation functions \citep{Bevilacqua:Faouzi_et_all:2019} can mitigate the computational burden associated with the optimal linear predictor
since sparse matrix algorithms can be exploited to handle the inverse of the covariance matrix efficiently.

\section{Simulation studies}\label{sec:5}

In this section, we focus on two simulation studies. The first one analyses the performance of the {\it wpl} method when estimating the Poisson random field under the spatial and spatio-temporal settings. The second one analyses the Poisson optimal linear predictor's performance, comparing our approach with the Poisson GC and Poisson LG models. This study is presented in the online supplement.

\subsection{Performance of the {\it wpl} estimation}\label{pol}

In this study, we consider $1000$ realizations from a stationary spatial Poisson random field observed at $\bs_i\in [0,1]^2$, $i=1,\ldots , l$, $l = 441$. Specifically, we considered a regular grid with increments of size $0.05$ over the unit square $[0, 1]^2$. The grid points were perturbed, adding a uniform random value over $[-0.015, 0.015]$ to each coordinate. A perturbed grid allows us to obtain more stable estimates since different sets of small distances are available and very close location points are avoided. 
The simulation of the Poisson random field follows directly from the stochastic representation in \eqref{qqq}, and it depends on the simulation of a sequence of independent copies of an exponential random field obtained by transforming independent copies of a standard Gaussian random field. These random fields were simulated using the Cholesky decomposition.

For the Poisson random field we, consider $\lambda(\bs)=e^{\beta}$ with 
 $\beta=\log(2),$ $\log(5),$ $\log(10),$ $\log(20)$, and an underlying isotropic correlation model $\rho(\bh)=(1-||\bh||/\alpha)^{4}_+$ with $\alpha=0.2$. As outlined in Section \ref{subsec:3.2}, the use of a compactly supported correlation function simplifies the computation of the bivariate Poisson distribution proposed in Theorem \ref{theopdf}. 
 
We study the performance of the Poisson {\it wpl}, the misspecified Gaussian {\it wpl} and the misspecified Gaussian maximum likelihood (ML) estimation methods. In the (misspecified) {\it wpl} estimation, we consider a cut-off weight function, as in \eqref{wer}, with $\xi = 0.1$.

\begin{table}[htb!]
\centering
\scalebox{0.80}{
\begin{tabular}{|c|cc|cc|cc|}
  \cline{2-7} 
\multicolumn{1}{c|}{} &\multicolumn{2}{c|}{Poisson {\it wpl}} & \multicolumn{2}{c|}{Gaussian {\it wpl}}&\multicolumn{2}{c|}{Gaussian ML}\\
  \cline{2-7} 
 \multicolumn{1}{c|}{} & Bias &MSE&  Bias & MSE &  Bias & MSE  \\ 
  \hline
 $\beta=\log(2)$  &-0.00251 & 0.00151 &  -0.00348  &0.00161  & -0.00366  & 0.00165  \\ 
 $\alpha=0.2$ & -0.00663   & 0.00113 &-0.00828 & 0.00208&-0.00748    & 0.00203 \\ 
   \hline
 $\beta=\log(5)$ &-0.00113 &0.00065&  -0.00147 & 0.00068& -0.00161 &  0.00068 \\  
 $\alpha=0.2$ & -0.00435&  0.00098&  -0.00422 & 0.00149 & -0.00344 & 0.00145  \\ 
    \hline
    $\beta=\log(10)$ &    0.00052 & 0.00033 &0.00031&  0.00033  &0.00014 & 0.00033  \\ 
 $\alpha=0.2$   &  -0.00261  &0.00096 &-0.00336 & 0.00120& -0.00296 & 0.00115  \\ 
   \hline
    $\beta=\log(20)$ & -0.00026 &    0.00018 &-0.00039   &  0.00019 &-0.00037  &   0.00018 \\ 
 $\alpha=0.2$   &  -0.00449&  0.00094& -0.00499  &0.00099& -0.00402 & 0.00095  \\ 
   \hline
\end{tabular}}
\caption{Bias and MSE associated with Poisson {\it wpl}, misspecified Gaussian {\it wpl} and misspecified Gaussian ML when the true random field is Poisson with $\lambda(\bs)=e^{\beta}$ and $\rho(\bh)=(1-||\bh||/\alpha)^{4}_+$.}\label{simu11}
\end{table}

Table \ref{simu11} shows the bias and mean squared error associated with $\beta$ and $\alpha$ through the four scenarios and three estimation methods. As expected, the misspecified Gaussian ML performs slightly better than the misspecified Gaussian {\it wpl}. More importantly, it can be recognized that the Poisson {\it wpl} shows the best performance, particularly when estimating the spatial dependence parameter. This fact is more evident for low counts {\it i.e.}, when $\beta$ is decreasing. However, when increasing the mean, the performances of the three methods of estimation tend to be considerably similar, in particular when the mean of the Poisson random field is $20$.

To summarize, the Poisson {\it wpl} is the best method for estimating the Poisson random field when the mean is small (lower than 20 as a rule of thumb in our experiments). For large counts, the misspecified Gaussian {\it wpl} or ML methods show approximately the same performance as the Poisson {\it wpl} method.
 
We also study the proposed methods' performance when estimating a non-stationary version of the Poisson random field. Under the previous simulation setting we changed the constant mean by considering a regression model, that is, $\lambda(\bs)=\exp\{\beta+\beta_1u_1(\bs)+\beta_2 u_2(\bs)\}$ with $\beta=1.5$, $\beta_1=-0.2$ and $\beta_2=0.3$, where $u_1(\bs)$ and $u_2(\bs)$ are independent realizations from a $(0,1)$ uniform random variable. Table \ref{simu22} shows the bias and MSE associated with $\beta$, $\beta_1$, $\beta_2$ and $\alpha$ for the three methods of estimation, and Figure 1 of the online supplement plots the associated centred boxplots. 

Additionally, in this case, the Poisson {\it wpl} method shows the best performance. We replicate the simulation by considering larger values of the regression parameters (the results are not reported here), which lead to larger counts, and the three methods of estimations show approximately the same performance as in the stationary case.

\begin{table}[htb!]
\centering
\scalebox{0.80}{
\begin{tabular}{|l|cc|cc|cc|}
  \cline{2-7}
\multicolumn{1}{c|}{} &\multicolumn{2}{c|}{Poisson {\it wpl}}& \multicolumn{2}{c|}{Gaussian {\it wpl}}&\multicolumn{2}{c|}{Gaussian ML}\\
  \cline{2-7}
\multicolumn{1}{c|}{} & Bias &MSE&  Bias & MSE &  Bias & MSE  \\ 
  \hline
 $\beta=1.5$ &  -0.00263 & 0.00419  &-0.00359  &0.00445  & -0.00320&  0.00428  \\ 
  $\beta_1=-0.2$ &  0.00189  &0.00618  &0.00148 & 0.00665 &0.00046 & 0.00627  \\ 
    $\beta_2=0.3$ &   0.00185& 0.00608& 0.00202 &0.00661 &0.00182 &0.00621\\ 
 $\alpha=0.2$ &   -0.00148 &0.00091&   -0.00030& 0.00124& 0.00096& 0.00122 \\ 
   \hline
\end{tabular}}\caption{Bias and MSE associated
with the Poisson {\it wpl}, misspecified Gaussian {\it wpl} and misspecified Gaussian ML when estimating a non-stationary Poisson random field with $\lambda(\bs)=\exp\{\beta+\beta_1u_1(\bs)+\beta_2 u_2(\bs)\}$ and  $\rho(\bh)=(1-||\bh||/\alpha)^{4}_+$.} \label{simu22}
\end{table}

Finally, we consider a simulation scheme under a spatio-temporal setting. Specifically, we consider $1000$ simulations from a non-stationary space-time Poisson random field observed at $\bs_i\in [0,1]^2$, $i=1,\ldots, l$, $l = 40$ spatial location sites, uniformly distributed within the unit square and $t^*_1=0, t^*_2=0.25, \ldots t^*_{25} =6$, $25$ time points. We consider a regression model for the spatio-temporal mean $\lambda(\bs,t^*)=\exp\{\beta+\beta_1u_1(\bs,t^*)+\beta_2 u_2(\bs,t^*)\}$, where $u_k(\bs,t^*)$, $k=1,2$ are independent realizations from a $(0,1)$ uniform random variable. We set $\beta=1.5$, $\beta_1=-0.2$ and $\beta_2=0.3$ as in the previous simulation scheme.

Additionally, as the underlying space-time correlation, we use a simple isotropic and temporal symmetric space-time Wendland separable model $\rho(\bh,t^\star)=(1-||\bh||/\alpha_\bs)^{4}_+ (1-|t^\star|/\alpha_{t^*})^{4}_+$
with $\alpha_\bs=0.2$ and $\alpha_{t^*}=1$, where $t^\star=t^*_i-t^*_j$ with $i,j\in \{1,2,\ldots,25\}$. Finally, for the misspecified \textit{wpl} estimation, we consider a cut-off weight function as in \eqref{wer} extended to the space time case, with $\xi_{\bs} = 0.2$ and $\xi_{t^*} =0.5$. 

The results concerning this simulation study are shown in Table \ref{simu22st}, including the bias and MSE associated with $\beta$, $\beta_1$, $\beta_2$ and $\alpha_\bs$, $\alpha_{t^*}$ for the three estimation methods. In addition, Figure 2 of the online supplement shows the associated box plots. As it can be observed, the Poisson {\it wpl} approach outperforms the misspecified Gaussian {\it wpl} and ML as expected for each parameter.

We want to highlight that we have replicated this simulation study by considering higher regression parameters' values, leading to larger counts (for space reasons, these results are not reported here). In that case, all of the estimation methods showed similar behaviours as in the purely spatial case.

 \begin{table}[htb!]
\centering
\scalebox{0.8}{
\begin{tabular}{|l|cc|cc|cc|}
  \cline{2-7}
\multicolumn{1}{c|}{} &\multicolumn{2}{c|}{Poisson {\it wpl}}& \multicolumn{2}{c|}{Gaussian {\it wpl}}&\multicolumn{2}{c|}{Gaussian ML}\\
  \cline{2-7}
\multicolumn{1}{c|}{} & Bias &MSE&  Bias & MSE &  Bias & MSE  \\ 
  \hline
 $\beta=1.5$ &    -0.00058 & 0.00166&  -0.00120 & 0.00180 & -0.00110 & 0.00167    \\ 
  $\beta_1=-0.2$ &  -0.00079 & 0.00257& -0.00056 &  0.00274 & -0.00102  &0.00249  \\
    $\beta_2=0.3$ &0.00036   & 0.00284  &0.00070  & 0.00302 &0.00062 &  0.00267    \\
 $\alpha_\bs=0.2$ &      -0.01057 & 0.00464 &-0.01323 & 0.00630 &-0.01343& 0.00629    \\
  $\alpha_{t^*}=1$ &    -0.00124 & 0.01846& 0.00165 & 0.02534 &0.00032 & 0.02415\\ 
   \hline
\end{tabular}}\caption{Bias and MSE associated
with the Poisson {\it wpl}, misspecified Gaussian {\it wpl} and misspecified  Gaussian ML when estimating a non-stationary spatiotemporal Poisson random field with $\lambda(\bs,t^*)=\exp\{\beta+\beta_1u_1(\bs,t^*)+\beta_2 u_2(\bs,t^*)\}$ and  $\rho(\bh,t^\star)=(1-||\bh||/\alpha_\bs)^{4}_+(1-|t^\star|/\alpha_{t^*})^{4}_+$.} \label{simu22st}
\end{table}


\section{Application to the  reindeer pellet-group survey in Sweden}\label{sec:6}

As mentioned in the Introduction, the pellet-group survey is a technique that provides a general idea of species distribution over a specific geographic area. This technique is used, mainly, for (a) estimating the population density of several ungulate species such as deer \citep[see for example][among many others]{EberhardtVanEtten1956,FreedyBowden1983,MootyEtAl1984,RowlandEtAl1984,MarquesEtAl2001} and (b) to study the impact of some covariates in the election of their habitat selection \citep[see for example][]{SkarinEtAl2015,Lee2016,SkarinEtAl2017}.

Our analysis considers a reindeer pellet-group survey that was conducted on Storliden Mountain in the northern forest area of Sweden over the years 2009–2010. We focus in 2009 and specifically we consider 
pellet-group count data collected between June 3rd and 8th in 2009 \citep{Lee2016}. The main goal of the survey was to assess the impact of newly established wind farms on reindeer habitats. 


In practice, we observe the total number of pellet-groups (a pellet-group is defined as a cluster of 20 or more pellets)
at 357 location sites $y(\bs_i)$, $i=1,2,\ldots, 357$.
In this case, the mechanism generating the pellet-groups, for each location, can be assumed as a Poisson process, where the  inter-arrival times between one pellet-group and another could be assumed to be exponentially distributed. Under this assumption,
the proposed Poisson random field can be a useful tool to analyze the pellet-groups counts data.

The dataset possesses two challenging features. The first one is that 73.67\% of the counts are zeros since the animal might move as it defecates, and some plots present zero pellet-group counts (see Figure \ref{fig:map}). The second one is that the empirical semi-variogram (see Figure 3 of the online supplement)
exhibits both spatial correlation and a nugget effect.

\begin{figure}[ht!]
\begin{center}
\setlength{\unitlength}{0.1\textwidth}
\scalebox{0.65}{
\includegraphics[width=0.8\linewidth, height=0.4\textheight]{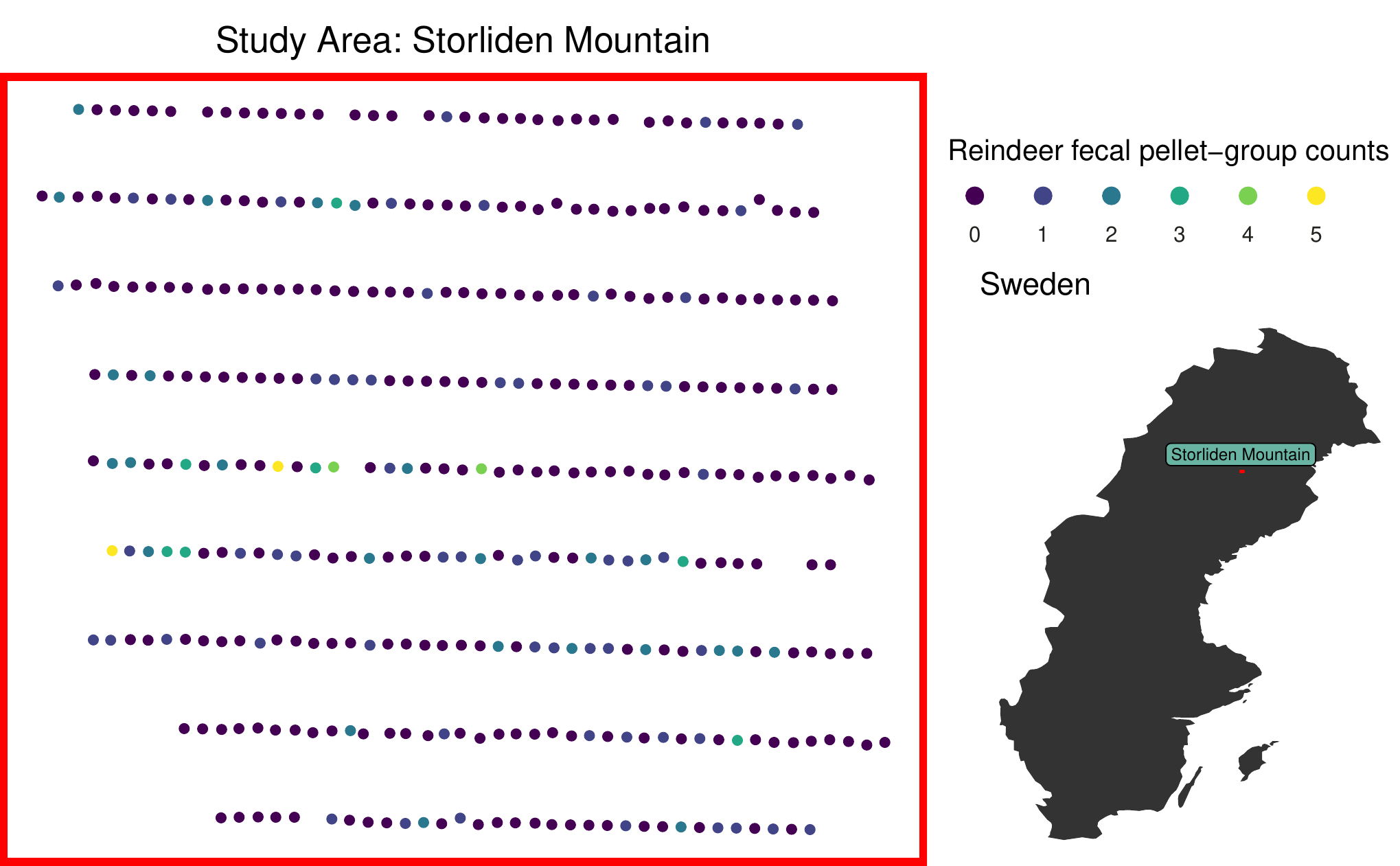} }
\end{center}
\caption{Spatial location of reindeer pellet-group survey data.
}\label{fig:map}
\end{figure}

To face this problem we consider the 
zero-inflated Poisson (ZIP) random field proposed in Section \ref{subsec:3.3}.
and we compare it with the ZIP Gaussian copula (ZIP GC) using the 
\texttt{R} package \texttt{gcKrig} \citep{JSSv087i13}. In addition, we consider 
the ZIP Log-Gaussian (ZIP LG) random field  as implemented in the \texttt{R} package \texttt{INLA} \citep{R-INLA:1, R-INLA:2, R-INLA:3} which exploits the integrated nested Laplace approximation, 
under a Bayesian framework, in the estimation step. 

Since our application's primary goal is to assess the impact of newly established wind farms on reindeer habitats, we are interested in relating the number of pellet-groups with covariates such as distance to power lines, slopes, or elevation of the field. Therefore, and following the results of \cite{Lee2016}, we include three covariates: Northwest slopes (NS), Elevation (Eln) and Distance to power lines (DPL). In particular we specify $\lambda(\bs)$ as:
\vspace{-0.5cm}
$$\lambda(\bs)=\exp(\beta_0+\beta_{NS}\mathrm{NS}(\bs)+\beta_{Eln}\mathrm{Eln}(\bs)+\beta_{DPL}\mathrm{DPL}(\bs)).\vspace{-0.5cm}$$
The parameterization for the marginal mean and variance is slightly different for the three models.
Specifically, assuming that the  probability of excess of zero counts $p$ does not depend on $\bs$, the marginal mean and variance specifications 
are given by $\E(Y(\bs))=\lambda(\bs)(1-p)$ and  $\Var(Y(\bs))=\E(Y(\bs))\left[1+\dfrac{p}{1-p}\E(Y(\bs))\right]$ for the proposed model, 
$\E(N(\bs))=\lambda(\bs)$ and $\Var(Y(\bs))=\E(Y(\bs))\left[1+\theta_{GC}\E(Y(\bs))\right]$ for the GC model,  $\E(Y(\bs))=\lambda(\bs)\exp(0.5\sigma^2)(1-p)$ and  $\Var(Y(\bs))=\E(Y(\bs))\left[1+\dfrac{p}{1-p}\E(Y(\bs))\right]+\dfrac{\E(Y(\bs))^2}{1-p}\left[\exp(\sigma^2)-1\right]$ for the LG model.

Here, 
$p$ is specified as $\Phi(\theta)$, with $\theta \in \R$, as $\frac{\theta_{GC}}{1+\theta_{GC}}$, 
with $\theta_{GC}>0$, and  as $\frac{\exp(\theta_{LG})}{1+\exp(\theta_{LG})}$ , with $\theta_{LG} \in \R$, respectively,
so $\theta$, $\theta_{GC}$, $\theta_{LG}$ can be interpreted as overdispersion parameters.
 It is important to remark that $\beta_0$ can not be compared between the different approaches, but $\beta_{NS}$, $\beta_{Eln}$ and $\beta_{DPL}$ can be compared.     

We assume an underlying exponential correlation model with nugget effect $\rho(\bh)=(1-\tau^2)e^{-||\bh||/\alpha} +\tau^2\mathds{1}_{0}(||\bh||)$ for the ZIP GC and ZIP LG random fields.
On the other hand for the proposed ZIP model 
we specify $\rho_1(\bh)=(1-\tau_2^2)e^{-||\bh||/\alpha}+\tau_2^2\mathds{1}_{0}(||\bh||)$ and $\rho_2(\bh)=(1-\tau_1^2)e^{-||\bh||/\alpha}+\tau_1^2\mathds{1}_{0}(||\bh||)$
that is two different underlying correlation models for $B$ and $N$ in \eqref{ppppp}
sharing a common exponential correlation model and  different nugget effects $\tau_1^2,\tau_2^2$.

We use maximum {\it wpl} estimation with $\xi=150$ in \eqref{wer} for our ZIP random field. 
For the ZIP GC model, we perform maximum likelihood estimation as explained in Section 3 in the online supplement, and for ZIP LG model, we perform approximate Bayesian inference 
 using the \texttt{INLA} approach \citep{R-INLA:1}.

Table \ref{est.app} summarizes the results of the estimates, including their standard error
for the three models. In the case of our ZIP random field, standard errors were computed by using parametric bootstrap \citep{Efron.Tibshirani::1986}. For the Poisson LG model the reported estimates are the  means of the posterior distributions with associated standard error.

Note that if $\beta_{DPL}$ is a positive value, then the counts of pellet-groups increase at larger distances from the power lines, {\it i.e.}, there is a greater reindeer population far from the wind farms. The estimation of the regression parameters is quite similar for our ZIP and the ZIP GC models with lower standard error estimations for the GC model. On the other hand, our ZIP model shows the smallest standard error estimation of the spatial scale parameter $\alpha$. Finally, the estimates of $p$, the excess of zero counts, which depend on $\theta$, $\theta_{GC}$
and $\theta_{LG}$ for the ZIP, ZIP GC and ZIP LG random fields, are given by $0.481$, $0.477$, $0.573$, respectively.

\begin{table}[ht!]
\centering
\scalebox{0.7}{
\begin{tabular}{|c|c|c|c|c|c|c|c|c|c|c|c|c|c|}
  \cline{2-14}
\multicolumn{1}{c|}{} & $\beta_0$  &  $\beta_{NS}$ & $\beta_{Eln}$ &  $\beta_{DPL}$ &$\theta$ & $\theta_{GC}$& $\theta_{LG}$ &$\alpha$ & $\tau_1^2$ & $\tau_2^2$ & $\tau^2$& $\sigma^2$ & $\overline{\mathrm{RMSE}}$ \\ 
  \hline
\multirow{ 2}{*}{ZIP}    & -23.423      & -0.534   & 0.005        & 2.594        & -0.048   &               &
                &339.132 &0.868   & 0.624    &           &               & \multirow{ 2}{*}{0.797}\\ 
                    & (6.473)      & (0.469) & (0.004)       & (0.742) & (1.048)       &               &
                &(92.981)          & (0.263) & (0.308)       &           &               &  \\
\hline
\multirow{ 2}{*}{ZIP GC}    &-19.096       & -0.465   &0.003         &2.060     &               &0.912         &
                    &298.926  &     &          &0.714         &           &\multirow{ 2}{*}{0.800}\\
                    &(2.309)       & (0.364)   &(0.003)         &(0.265)     &               &(0.261)         &
                    &(186.990)  &     &          &(0.1285)         &           &\\
\hline
\multirow{ 2}{*}{ZIP LG}     &-17.905       &-0.826    &0.010         &1.622     &               &               &
 0.293             &685.922   &       &           & 0.084    & 0.735    &\multirow{ 2}{*}{0.835} \\
                    &(9.514)       &(0.388)    &(0.005)         &(1.177)     &               &               &
 (0.098)             &(354.614)   &       &           & (0.022)    & (0.396)    & \\
   \hline
\end{tabular}}
\caption{Parameter estimates for the reindeer pellet-group survey data obtained under the ZIP, ZIP GC and ZIP LG random fields. The associated standard errors are in parenthesis. The last column shows the associated empirical mean of the RMSE for each model.} \label{est.app}
\end{table}

The three models considered can also be used for prediction of the of pellet-group counts at specific not sampled location sites. In particular, this approach allows us to discover new areas for reindeer habitat employing the predicted number of pellet-groups, potentially providing information about the behavior of the reindeer population over the entire region.
With this goal in mind, we want to assess the predictive performances of the three models. To do so, we randomly choose 80\% of the spatial locations ({\it i.e.}, $286$ location sites) for the parameter estimation and use the remaining 20\% ({\it i.e.}, $71$ location sites) for the predictions. We repeat this procedure $100$ times, recording the RMSE each time. Specifically, for each $j$-th left-out sample $(y_j(\bs_1),y_j(\bs_2),\ldots,y_j(\bs_{71}))$, we compute
$$\mathrm{RMSE_j}=\left(\dfrac{1}{71}\sum\limits_{i=1}^{71}(y_j(\bs_i)-\widehat{Y}_j(\bs_i))^2\right)^{1/2},$$
where
 $\widehat{Y}_j(\bs_i)$ is the optimal linear predictor for our ZIP random field (computed using the correlation given in Proposition 1 in the online supplement), the optimal predictor for the ZIP GC random field and the mean of the posterior predictive distribution for the ZIP LG random field. We report in Table \ref{est.app}, the empirical mean of the RMSE obtained for each left-out sample, {\it i.e.}, $\overline{\mathrm{RMSE}}=\sum\limits_{j=1}^{100}\mathrm{RMSE_j}/100$. The ZIP and ZIP GC random fields' clearly outperform the ZIP LG random field in terms of prediction performance. In particular the  proposed ZIP random field provides the smallest $\overline{\mathrm{RMSE}}$.

 Finally, as suggested by one Referee, we perform a simulation-based model assessment. Specifically, we simulated 10000 realizations under the three fitted models and counted the number of observations lying between the 95\% probability intervals constructed with the simulated data. The results show that our ZIP proposed model and the ZIP GC are pretty similar, with 97.2\% and 97.7\% of the data lying in the 95\% probability intervals, respectively. In the case of the ZIP LG, 96.9\% of observations lie between the 95\% probability interval.

In addition, we compared the empirical semi-variogram of the data with the ones
obtained from the simulations, and we found that our proposal (ZIP model) presents a small 95\% probability interval compared to its competitors (a tight interval) (see Figure \ref{fig:semivars}). This situation implies that our approach provides less uncertainty when estimating spatial dependence.

\begin{figure}[ht!]
\begin{center}
\setlength{\unitlength}{0.1\textwidth}
\scalebox{0.5}{
\includegraphics[scale=1]{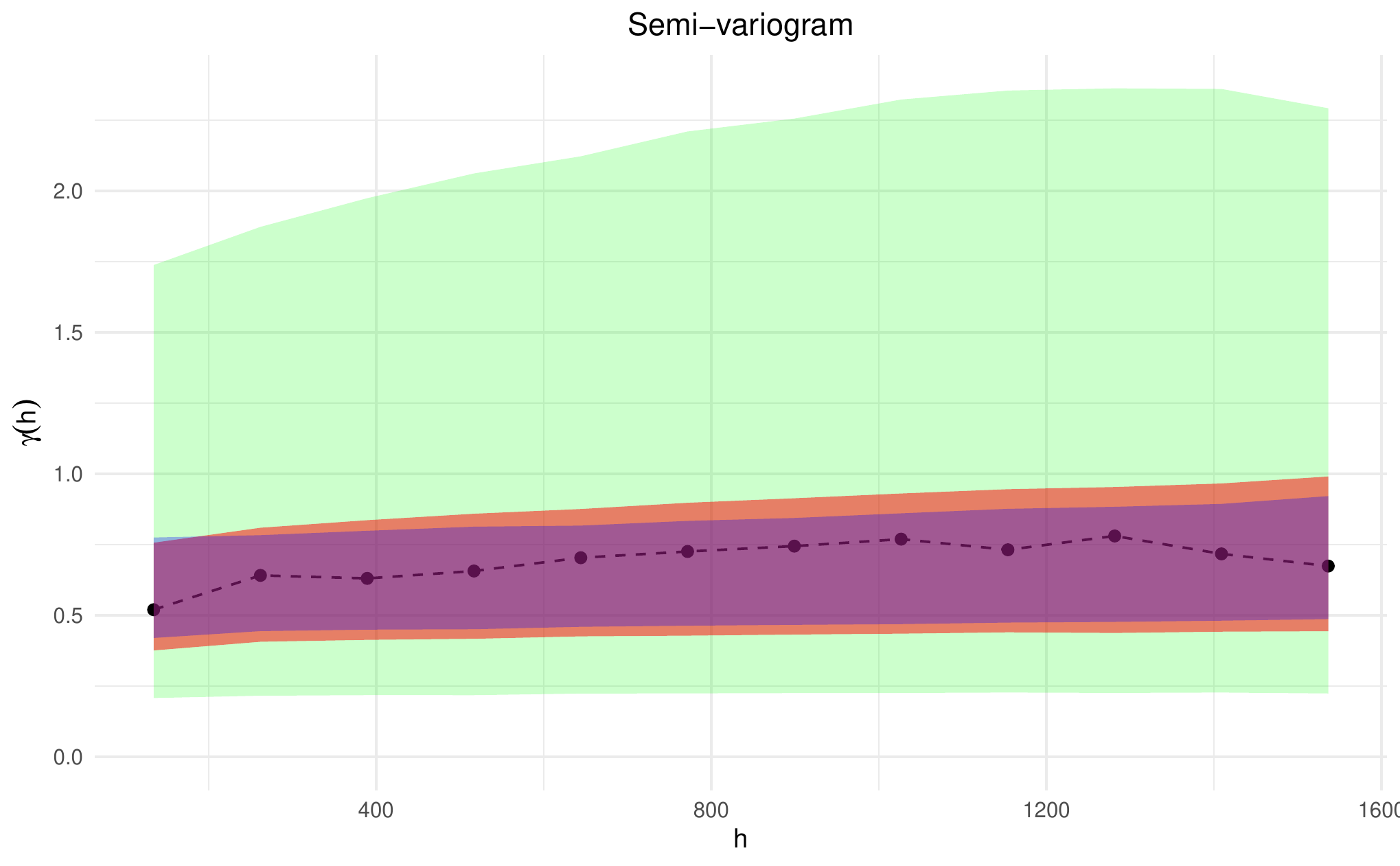} }
\end{center}
\caption{95\% probability intervals for semi-variograms under ZIP (purple), ZIP GC (orange) and ZIP LG (green) models. The empirical semi-variogram for the pellet group data is depicted in the black dashed line.
}\label{fig:semivars}
\end{figure}

\section{Concluding remarks}\label{sec:7}

This paper has introduced a  novel Poisson random field, {\it i.e.}, a random field with Poisson marginal distributions, for regression and dependence analysis when addressing point-referenced count data defined on a spatial Euclidean space. However, the proposed methodology can be easily adapted to other types of data, such as space-time \citep{gneiting2013}, areal \citep{Rue:Held:2005} or spherical data \citep{gneiting2013}.

By construction, for each spatial location, the proposed model is a Poisson counting process, {\it i.e.}, it represents 
the random total number of events occurring in an arbitrary interval of time when the inter-arrival times are exponentially distributed.

More importantly, given two arbitrary location sites, the associated Poisson counting processes are spatially correlated. For this reason, the proposed model can be viewed as a spatial generalization of the Poisson process.

The correlation between the Poisson counting processes is achieved by considering  sequences of independent copies of a random field with an exponential marginal distribution as inter-arrival times in the counting renewal processes framework. The resulting (non-)stationary random field is marginally Poisson distributed and the dependence is indexed by a correlation function.

They key features of the proposed Poisson random field  with respect to the  Poisson Log-Gaussian random field are that its marginal distribution is Poisson distributed and it can be mean square continuous or not. The Poisson Gaussian copula approach shares these good features with our model. However, the generating mechanisms ({\it i.e.}, the Poisson process), underlying our model makes it more appealing from interpretability viewpoint. 

In our proposal, a possible limitation is that inference based on full likelihood cannot be performed due to the lack of amenable expressions of the associated multivariate distributions. Nevertheless, the simulations studies we conducted showed that our approach, 
based on a pairwise likelihood estimation seems to be an effective solution for estimating the unknown parameters involved in the Poisson random field. Another potential limitation is that the optimal predictor that minimizes the mean square prediction error is not available. However, our numerical experiments show that our solution based on optimal linear predictor performs very well when compared with the optimal predictors of the Poisson Gaussian copula and Poisson Log Gaussian models.
Finally, the application of our model to the reindeer pellet-group survey data in Sweden shows that our approach can be easily adapted to handle spatial count data with an excessive number of zeros.

A well-known restriction of the Poisson distribution is equidispersion. Unfortunately, this situation is not always observed in real spatial data. The class of random fields proposed in \eqref{qqq} can be used to obtain random fields with flexible marginal models that consider over or under dispersion. In this case, a possible solution is to consider random fields with a more flexible marginal distribution than the exponential 
marginal distribution, such as the gamma or Weibull random fields \citep{Bevilacqua:2018ab}. The resulting marginal counting models have been studied in \cite{wink95} and \cite{mac2008}.
 Another alternative for obtaining over dispersed random fields is considering scale mixtures of Poisson random fields. These topics are currently under study and will be included in a forthcoming paper.

\if1\blind
{
\section*{Acknowledgements}

The work of Diego Morales-Navarrete, Moreno Bevilacqua and Luis M. Castro was partially supported by ANID - Millennium Science Initiative Program - NCN17\_059 from the Chilean government. Diego Morales-Navarrete also acknowledges support from the VRI-UC scholarship from the Pontificia Universidad Cat\'olica de Chile. Moreno Bevilacqua acknowledges financial support from Grant FONDECYT 1200068 and ANID/PIA/ANILLOS ACT210096, Chile 
and  project Data Observatory Foundation DO210001
from the Chilean government. The work of Christian Caama\~no was partially supported by grant FONDECYT 11220066 from the Chilean government and DIUBB 2120538 IF/R from University of B\'io-B\'io. Luis M. Castro acknowledges financial support from Grant FONDECYT 1220799 from the Chilean government. 
} \fi

\if0\blind
{

} \fi

\bibliographystyle{biom}

\bibliography{mybib2}

\end{document}